\newcommand{\e}{\varepsilon}
\newcommand{\dfempty}{\e}
\newcommand{\NN}[1]{\{1,\ldots,#1\}}
\newcommand{\dfplus}[1]{#1^{\diamond}}
\newcommand{\DFM}[2]{#1^{\diamond}_{#2}}
\newcommand{\DFX}[1]{#1^{\diamond}_{\bullet}}
\newcommand{\dfimglabel}[3]{
    \put(#1,#2){
        \put(0,0){\line(1,0){10}}
        \put(0,0){\line(0,1){10}}
        \put(10,10){\line(-1,0){10}}
        \put(10,10){\line(0,-1){10}}
        \put(3,2){$#3$}
    }
}
\newcommand{\dfimgbegin}[2]{
    \put(#1,#2){
        \put(5,5){\circle{10}}
    }
}
\newcommand{\dfimgend}[2]{
    \put(#1,#2){
        \put(0,0){{\huge $\diamond$}}
    }
}
\newcommand{\dfimgbeginlabel}[3]{
    \dfimglabel{#1}{#2}{#3}
    \dfimgbegin{#1}{#2}
}
\newcommand{\dfimgendlabel}[3]{
    \dfimglabel{#1}{#2}{#3}
    \dfimgend{#1}{#2}
}
\newcommand{\dfimgbeginendlabel}[3]{
    \dfimglabel{#1}{#2}{#3}
    \dfimgbegin{#1}{#2}
    \dfimgend{#1}{#2}
}
\newcommand{\dfbegin}[1]{\mbox{\rm begin}(#1)}
\newcommand{\dfend}[1]{\mbox{\rm end}(#1)}
\newcommand{\dftransition}[1]{\mbox{\rm tran}(#1)}
\newcommand{\dflabel}[1]{\mbox{\rm label}(#1)}
\newcommand{\dfdomain}[1]{\mbox{\rm dom}(#1)}
\newcommand{\CEP}[1]{\mbox{\rm CE}^+(#1)}
\newcommand{\CEN}[1]{\mbox{\rm CE}^-(#1)}
\newcommand{\CWP}[1]{\mbox{\rm CW}^+(#1)}
\newcommand{\CWN}[1]{\mbox{\rm CW}^-(#1)}
\newcommand{\B}[2]{\mbox{\rm B}(#1,#2)}
\newcommand{\HP}[2]{\mbox{\rm HP}(#1,#2)}
\newcommand{\R}[2]{\mbox{\rm Rot}_{#1}(#2)}
\newcommand{\tr}[1]{\mathrm{tr}_{#1}}
\newcommand{\hull}[1]{\mathrm{hull}(#1)}
\newcommand{\hullm}[1]{\mathrm{hull}^{*}(#1)}
\newcommand{\block}{{\mathrm B}_0}
\newcommand{\rblock}{{\mathrm B}_{\mathrm R}}
\newcommand{\lblock}{{\mathrm B}_{\mathrm L}}
\newcommand{\confBegends}[2]{\mathrm{EB}^{#1}_{\mathrm #2}}
\newcommand{\conf}[1]{\mathrm{C}_{\mathrm #1}}
\newcommand{\confDom}[2]{D^{#1}_{\mathrm #2}}
\newcommand{\confLab}[2]{l^{#1}_{\mathrm #2}}
\newcommand{\confEnd}{\odot}
\newcommand{\ddx}{\ddot{x}}
\newcommand{\ddy}{\ddot{y}}
\newcommand{\dx}{\dot{x}}
\newcommand{\dy}{\dot{y}}
\newcommand{\DHSP}[4]{
	\put(0,0){\line(1,0){50}}
	\put(0,0){\line(0,1){50}}
	\put(50,0){\line(0,1){50}}
	\put(0,50){\line(1,0){50}}
	
	\put(23,40){$#1$}
	\put(38,22){$#2$}
	\put(23,5){$#3$}
	\put(3,22){$#4$}
}
\newcommand{\DHSWE}{\put(15,25){\vector(1,0){20}}}
\newcommand{\DHSWS}{\put(15,25){\vector(1,-1){10}}}
\newcommand{\DHSEW}{\put(35,25){\vector(-1,0){20}}}
\newcommand{\DHSES}{\put(35,25){\vector(-1,-1){10}}}
\newcommand{\DHSNE}{\put(25,35){\vector(1,-1){10}}}
\newcommand{\DHSNW}{\put(25,35){\vector(-1,-1){10}}}
\newcommand{\DHSSE}{\put(25,15){\vector(1,1){10}}}
\newcommand{\DHSSN}{\put(25,15){\vector(0,1){20}}}
\newcommand{\DHSNS}{\put(25,35){\vector(0,-1){20}}}
\newcommand{\DHSWN}{\put(15,25){\vector(1,1){10}}}
\newcommand{\DHSSW}{\put(25,15){\vector(-1,1){10}}}
\newcommand{\DHSEN}{\put(35,25){\vector(-1,1){10}}}
\newcommand{\DHSh}[7]{\mbox{\rm DHS}_#1(#2,#3,#4,#5)^{#6}_{#7}}
\newcommand{\DHS}[6]{\mbox{\rm DHS}(#1,#2,#3,#4)^{#5}_{#6}}
\newcommand{\circTimes}[2]{\underbrace{#1\circ\cdots\circ #1}_{#2\mathrm{~times}}}
\newtheorem{definition}{Definition}
\newtheorem{theorem}{Theorem}
\newtheorem{example}{Example}
\newtheorem{lemma}{Lemma}
\newtheorem{proposition}{Proposition}
\newtheorem{corollary}{Corollary}
\author{W{\l}odzimierz Moczurad\thanks{Supported by National Science Centre (NCN) grant no.\ 2011/03/B/ST6/00418.}}
\title{Decidability of multiset, set and numerically decipherable directed figure codes}
\affiliation{Faculty of Mathematics and Computer Science, Jagiellonian University, 
Krak{\' o}w, Poland}
\keywords{directed figures, uniquely decipherable codes, multiset decipherable codes,
decipherability verification}
\begin{document}

\maketitle

\begin{abstract}
Codes with various kinds of decipherability, weaker than the
usual unique decipherability, have been studied since multiset
decipherability was introduced in mid-1980s. We consider
decipherability of directed figure codes, where directed figures
are defined as labelled polyominoes with designated start and
end points, equipped with catenation operation that may use a
merging function to resolve possible conflicts. This is one of
possible extensions generalizing words and variable-length codes
to planar structures.
Here, verification whether a given set is a code is no longer
decidable in general. We study the decidability status of figure
codes depending on catenation type (with or without a merging
function), decipherability kind (unique, multiset, set or
numeric) and code geometry (several classes determined by
relative positions of start and end points of figures). We give
decidability or undecidability proofs in all but two cases that
remain open.
\end{abstract}

\section{Introduction}

The classical notion of a code requires that an encoded message
should be decoded uniquely, \textit{i.e.}\ the exact sequence of
codewords must be recovered. In some situations, however, it
might be sufficient to recover only the multiset, the set or
just the number of codewords. This leads to three kinds of
decipherability, known as \textit{multiset} (MSD), \textit{set}
(SD) and \textit{numeric decipherability} (ND), respectively.
The original exact decipherability is called \textit{unique
decipherability} (UD).

Multiset decipherability was introduced 
by~\cite{Lempel}, whilst numeric decipherability originates
in~\cite{HeadWeber1}. The same authors
in~\cite{HeadWeber2} develop what they call ``domino graphs'' providing a
useful technique for decipherability verification. 
\cite{Guzman} defined set decipherability and presented a
unifying approach to different decipherability notions using
varieties of monoids. Contributions by
\cite{Restivo} and \cite{BS1}
settle Lempel's conjectures for some MSD and SD codes.
\cite{BS2} characterizes decipherability of
three-word codes, whilst \cite{BurRest1,BurRest2} relate
decipherability to the Kraft inequality and to
coding partitions. A paper by~\cite{SSY}, although not directly concerned with
decipherability, uses ND codes (dubbed \textit{length codes}) to
study prime decompositions of languages.

Extensions of classical words and variable-length word codes
have also been widely studied. For instance, \cite{AigrBeauq}
introduced polyomino codes;
two-dimensional rectangular pictures were studied by 
\cite{GiamRest}, whilst \cite{MantRest}
described an algorithm to verify tree codes.
Recent results on picture codes include \textit{e.g.}\ \cite{AGM1,AGM2}.
The interest in picture-like structures is not surprising, given
the huge amounts of pictorial data in use. Unfortunately,
properties related to decipherability are often lost when moving
to a two-dimensional plane. In particular, decipherability
testing (\textit{i.e.}\ testing whether a given set is a code) is
undecidable for polyominoes and similar structures,
\textit{cf.}~\cite{Beauq,WMijcm}.

In \cite{KolMoc} we introduced directed figures defined as
labelled polyominoes with designated start and end points,
equipped with catenation operation that uses a merging function
to resolve possible conflicts. This setting is similar to
symbolic pixel pictures, described by \cite{CFG}, 
and admits a natural definition of
catenation. The attribute ``directed'' is used to emphasize the
way figures are catenated; this should not be confused with the
meaning of ``directed'' in \textit{e.g.}\ directed polyominoes.
We proved that verification whether a given finite set of
directed figures is a UD code is decidable. This still holds
true in a slightly more general setting of codes with weak
equality (see \cite{WMideal}) and is a significant change in
comparison to previously mentioned picture models, facilitating
the use of directed figures in, for instance, encoding and
indexing of pictures in databases. On the other hand, a directed
figure model with no merging function, where catenation of
figures is only possible when they do not overlap, has again
undecidable UD testing; \textit{cf.}~\cite{KolCOCOON,KolRAIRO}.
See also \cite{WMiwoca2013} for a short description of decipherability 
chracaterization with domino graphs.

In the present paper we extend the previous results by
considering not just UD codes, but also MSD, SD and ND codes
over directed figures. We prove decidability or undecidability
for each combination of the following orthogonal criteria:
catenation type (with or without a merging function),
decipherability kind (UD, MSD, SD, ND) and code geometry
(several classes determined by relative positions of start and
end points of figures). Two combinations remain open, however.

We begin, in Section~\ref{sec:defs}, with definitions of
directed figures and their catenations. Section~\ref{sec:codes}
defines decipherability kinds and shows the relationship between
codes of those kinds. In Section~\ref{sec:dec} main decidability
results for decipherability verification are given. Preliminary, short version
of this paper appeared as~\cite{KolMocIWOCA}.

\section{Preliminaries}\label{sec:defs}

Let $\Sigma$ be a finite, non-empty alphabet. A
\emph{translation} by vector $u=(u_x,u_y)\in\integers^2$ is
denoted by $\tr{u}$,
$\tr{u}:\integers^2\ni(x,y)\mapsto(x+u_x,y+u_y)\in\integers^2$.
By extension, for a set $V\subseteq\integers^2$ and an arbitrary
function $f:V\to\Sigma$ define
$\tr{u}:\mathcal{P}(\integers^2)\ni V \mapsto \{\tr{u}(v)\mid
v\in V\}\in \mathcal{P}(\integers^2)$ and $\tr{u}:\Sigma^V\ni f
\mapsto f\circ \tr{-u}\in \Sigma^{\tr{u}(V)}$.


\begin{definition}[Directed figure, \textit{cf.}~\cite{KolMoc}]
Let $D\subseteq\integers^2$ be finite and non-empty, $b,e\in\integers^2$
and $\ell:D\to\Sigma$. A~quadruple $f=(D,b,e,\ell)$ is
a \emph{directed figure} (over $\Sigma$) with
\begin{center}
\begin{tabular}{lrcl}
\emph{domain}      & $\dfdomain{f}$ &=& $D$,\\
\emph{start point} & $\dfbegin{f}$  &=& $b$,\\
\emph{end point}   & $\dfend{f}$     &=& $e$,\\
\emph{labelling function} & $\dflabel{f}$ &=& $\ell$.
\end{tabular}
\end{center}
\emph{Translation vector} of $f$ is defined as
$\dftransition{f}=\dfend{f}-\dfbegin{f}$. Additionally, the \emph{empty
directed figure} $\e$ is defined as $(\emptyset,(0,0),(0,0),\{\})$,
where $\{\}$ denotes a function with an empty domain.
Note that the start and end points need not be in the domain.
\end{definition}

The set of all directed figures over $\Sigma$ is denoted by
$\dfplus{\Sigma}$. Two directed figures $x,y$ are \emph{equal} (denoted
by $x=y$) if there exists $u\in\integers^2$ such that
\begin{displaymath}
   y=(\tr{u}(\dfdomain{x}),\tr{u}(\dfbegin{x}),\tr{u}(\dfend{x}),\tr{u}(\dflabel{x})).
\end{displaymath}
Thus, we actually consider figures up to translation.

\begin{example}
A directed figure and its graphical representation. Each point of
the domain, $(x,y)$, is represented by a unit square in $\reals^2$
with bottom left corner in $(x,y)$. A circle marks the start point
and a diamond marks the end point of the figure. Figures are
considered up to translation, hence we do not mark the coordinates.
\begin{displaymath}
    (\{(0,0),(1,0),(2,0),(1,1)\},(0,1),(2,1),\{(0,0)\mapsto a,(1,0)\mapsto b,(1,1)\mapsto a,(2,0)\mapsto a\})
\end{displaymath}
\begin{center}
\begin{picture}(30,20)
    \dfimgbegin{00}{10}
    \dfimglabel{00}{00}{a}
    \dfimglabel{10}{00}{b}
    \dfimglabel{10}{10}{a}
    \dfimglabel{20}{00}{a}
    \dfimgend{20}{10}
\end{picture}
\end{center}
\end{example}

\begin{definition}[Catenation, \textit{cf.}~\cite{KolMoc}]
Let $x=(D_x,b_x,e_x,\ell_x)$ and $y=(D_y,b_y,e_y,\ell_y)$ be directed figures.
If $D_x\cap\tr{e_x-b_y}(D_y)=\emptyset$,
a \emph{catenation} of $x$ and $y$ is defined as
\begin{displaymath}
x\circ y=(D_x\cup\tr{e_x-b_y}(D_y),b_x,\tr{e_x-b_y}(e_y),\ell),
\end{displaymath}
where
\begin{displaymath}
\ell(z)=\left\{\begin{array}{ll}
\ell_x(z) & \mbox{~for~} z\in D_x,\\
\tr{e_x-b_y}(\ell_y)(z) & \mbox{~for~} z\in \tr{e_x-b_y}(D_y).\\
\end{array}\right.
\end{displaymath}
If $D_x\cap\tr{e_x-b_y}(D_y) \neq \emptyset$, catenation of $x$ and $y$ is not defined.
\end{definition}

\begin{definition}[$m$-catenation, \textit{cf.}~\cite{KolMoc}]
Let $x=(D_x,b_x,e_x,\ell_x)$ and $y=(D_y,b_y,e_y,\ell_y)$ be directed figures.
An \emph{$m$-catenation} of $x$ and $y$ with respect to a \emph{merging function}
$m:\Sigma\times\Sigma\to\Sigma$ is defined as
\begin{displaymath}
x\circ_m y=(D_x\cup\tr{e_x-b_y}(D_y),b_x,\tr{e_x-b_y}(e_y),\ell),
\end{displaymath}
where
\begin{displaymath}
\ell(z)=\left\{\begin{array}{ll}
\ell_x(z)  & \mbox{~for~} z\in D_x\setminus\tr{e_x-b_y}(D_y),\\
\tr{e_x-b_y}(\ell_y)(z)  & \mbox{~for~} z\in \tr{e_x-b_y}(D_y)\setminus D_x,\\
m(\ell_x(z),\tr{e_x-b_y}(\ell_y)(z)) & \mbox{~for~} z\in D_x\cap\tr{e_x-b_y}(D_y).\\
\end{array}\right.
\end{displaymath}
\end{definition}

Notice that when $x\circ y$ is defined, it is equal to $x\circ_m y$, regardless
of the merging function~$m$.

\begin{example}
Let $\pi_1$ be the projection onto the first argument.
\begin{displaymath}
\begin{picture}(20,25)
    \dfimgbeginlabel{00}{20}{a}
    \dfimglabel{10}{20}{b}
    \dfimglabel{10}{10}{c}
    \dfimgend{10}{00}
\end{picture}~
\begin{picture}(30,25)
    \put(10,15){$\circ_{\pi_1}$}
\end{picture}~
\begin{picture}(20,25)
    \dfimgbeginlabel{00}{00}{a}
    \dfimglabel{00}{10}{b}
    \dfimglabel{10}{10}{c}
    \dfimgend{10}{20}
\end{picture}
\begin{picture}(30,25)
    \put(10,15){$=$}
\end{picture}~
\begin{picture}(30,25)
    \dfimgbeginlabel{00}{20}{a}
    \dfimglabel{10}{20}{b}
    \dfimglabel{10}{10}{c}
    \dfimglabel{10}{00}{a}
    \dfimglabel{20}{10}{c}
    \dfimgend{20}{20}
\end{picture}
\end{displaymath}
The ``non-merging'' catenation is not defined for the above figures. Note that the result 
of ($m$-)catenation does not depend on the original position of the second argument.
\end{example}

Observe that $\circ$ is associative, whilst $\circ_m$ is
associative if and only if $m$ is associative. Thus for
associative~$m$, $\DFM{\Sigma}{m}=(\dfplus{\Sigma},\circ_m)$ is a
monoid (which is never free).
From now on let $m$ be an arbitrary associative merging
function.

Abusing this notation, we also write $\dfplus{X}$ (resp.
$\DFM{X}{m}$) to denote the set of all figures that can be
composed by $\circ$ catenation (resp. $\circ_m$ $m$-catenation)
from figures in $X\subseteq\dfplus{\Sigma}$. When some statements
are formulated for both $\circ$ and $\circ_m$, we use the
symbol~$\bullet$ and ``$x\bullet y$'' should then be read as
``$x\circ y$ (resp.\ $x\circ_m y$)''. Similarly,
``$x\in\DFX{X}$'' should be read as ``$x\in\dfplus{X}$ (resp.\
$x\in\DFM{X}{m}$)''.

For $u,v\in\integers^2$, $\HP{u}{v}$ denotes a half-plane 
$\{w\in\integers^2\mid u\cdot (w-(v+u)) \leq 0\}$,
where $\cdot$ is the usual scalar product; see Figure~\ref{fig:HP}.
An angle between two vectors $u,v\in\integers^2$ is written as $\angle(u,v)$ and 
$\R{\phi}{u}$ denotes a rotation of~$u$ by an angle~$\phi$.
For $u=(u_x,u_y)\in\integers^2$ and $n\in\naturals$,
$\B{u}{n}$ denotes a ball on the integer grid with center~$u$ and
radius~$n$, \textit{i.e.},
$\B{u}{n}   = \{(v_x,v_y)\in\integers^2\mid\ |u_x-v_x|+|u_y-v_y| \leq n\}.$

\begin{figure}[htp]
 \begin{center}
 \begin{picture}(120,160)
 \thinlines

 \put(80,0){\line(0,1){160}}

 \put(80,10){\line(-1,0){120}}
 \put(80,30){\line(-1,0){120}}
 \put(80,50){\line(-1,0){120}}
 \put(80,70){\line(-1,0){120}}
 \put(80,90){\line(-1,0){120}}
 \put(80,110){\line(-1,0){120}}
 \put(80,130){\line(-1,0){120}}
 \put(80,150){\line(-1,0){120}}

 \put(10,77){$\bullet$}
 \put(78,77){$\bullet$}
 \put(12,80){\vector(1,0){67}}
 \put(5,72){$v$}
 \put(45,82){$u$}
 \put(82,72){$v+u$}

 \end{picture}
 \caption{$\HP{u}{v}$. The half-plane contains integer grid points lying on a vertical line 
 and to the left side of that line (the region marked by horizontal lines).}
 \label{fig:HP}
 \end{center}
 \end{figure}
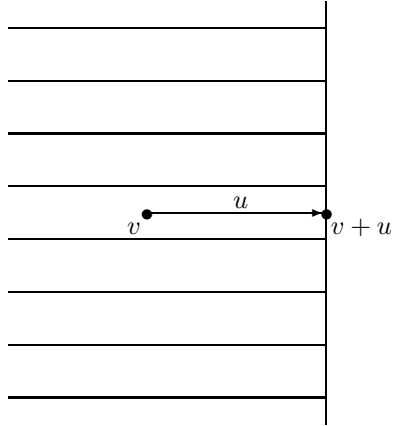

\section{Codes}
\label{sec:codes}

In this section we define a total of eight kinds of directed
figure codes, resulting from the use of four different notions
of decipherability and two types of catenation. Note that by a
\textit{code} (over $\Sigma$, with no further attributes) we
mean any finite non-empty subset of $\dfplus{\Sigma}\setminus\{\e\}$.

\begin{definition}[UD code]
Let $X$ be a code over $\Sigma$. $X$~is a \emph{uniquely
decipherable code}, if for any $x_1,\ldots,x_k$,
$y_1,\ldots,y_l\in X$ the equality $x_1\circ\cdots\circ
x_k=y_1\circ\cdots\circ y_l$ implies that $(x_1,\ldots,x_k)$ and
$(y_1,\ldots,y_l)$ are equal as sequences, \textit{i.e.}\ $k=l$
and $x_i=y_i$ for each $i\in\NN{k}$.
\end{definition}

\begin{definition}[UD $m$-code]
Let $X$ be a code over $\Sigma$. $X$~is a \emph{uniquely
decipherable $m$-code}, if for any $x_1,\ldots,x_k$,
$y_1,\ldots,y_l\in X$ the equality $x_1\circ_m\cdots\circ_m
x_k=y_1\circ_m\cdots\circ_m y_l$ implies that $(x_1,\ldots,x_k)$
and $(y_1,\ldots,y_l)$ are equal as sequences.
\end{definition}

In the remaining definitions, we use the obvious abbreviated notation.

\begin{definition}[MSD code and $m$-code]
Let $X$ be a code over $\Sigma$. $X$~is a \emph{multiset
decipherable code} (resp.\ \emph{$m$-code}), if for any
$x_1,\ldots,x_k$, $y_1,\ldots,y_l\in X$ the equality
$x_1\bullet\cdots\bullet x_k=y_1\bullet\cdots\bullet y_l$
implies that $\{\!\!\{x_1,\ldots,x_k\}\!\!\}$ and
$\{\!\!\{y_1,\ldots,y_l\}\!\!\}$ are equal as multisets.
\end{definition}

\begin{definition}[SD code and $m$-code]
Let $X$ be a code over $\Sigma$. $X$~is a \emph{set decipherable
code} (resp.\ \emph{$m$-code}), if for any $x_1,\ldots,x_k$,
$y_1,\ldots,y_l\in X$ the equality $x_1\bullet\cdots\bullet
x_k=y_1\bullet\cdots\bullet y_l$ implies that
$\{x_1,\ldots,x_k\}$ and $\{y_1,\ldots,y_l\}$ are equal as sets.
\end{definition}

\begin{definition}[ND code and $m$-code]
Let $X$ be a code over $\Sigma$. $X$~is a \emph{numerically
decipherable code} (resp.\ \emph{$m$-code}), if for any
$x_1,\ldots,x_k$, $y_1,\ldots,y_l\in X$ the equality
$x_1\bullet\cdots\bullet x_k=y_1\bullet\cdots\bullet y_l$
implies $k=l$.
\end{definition}

%
%
%
%
%

\begin{proposition}\label{prop:incl1}
If $X$ is a UD (resp.\ MSD, SD, ND) $m$-code, 
then $X$ is a UD (resp.\ MSD, SD, ND) code.
\end{proposition}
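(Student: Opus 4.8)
The plan is to exploit the remark, stated right after the definition of $m$-catenation, that whenever the ordinary catenation $x\circ y$ is defined it coincides with $x\circ_m y$ for every merging function $m$. From this it follows by a trivial induction on the number of factors that for any sequence $x_1,\ldots,x_k\in\dfplus{\Sigma}$, if the iterated $\circ$-catenation $x_1\circ\cdots\circ x_k$ is defined, then it is defined step by step and equals $x_1\circ_m\cdots\circ_m x_k$. The point is that $\circ$ is a partial operation whose graph is contained in the graph of the total operation $\circ_m$: every $\circ$-computation is also a legal $\circ_m$-computation producing the same figure.

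With that observation in hand, I would prove the contrapositive in each of the four cases simultaneously, since the arguments are identical. Suppose $X$ is not a UD (resp.\ MSD, SD, ND) code. Then there exist $x_1,\ldots,x_k,y_1,\ldots,y_l\in X$ with $x_1\circ\cdots\circ x_k=y_1\circ\cdots\circ y_l$ (in particular both sides are defined) while $(x_1,\ldots,x_k)$ and $(y_1,\ldots,y_l)$ differ as sequences (resp.\ as multisets, as sets, or $k\neq l$). By the induction above, the same equality holds with every $\circ$ replaced by $\circ_m$: $x_1\circ_m\cdots\circ_m x_k=y_1\circ_m\cdots\circ_m y_l$. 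Hence the witnessing pair of sequences shows that $X$ fails to be a UD (resp.\ MSD, SD, ND) $m$-code, which is the desired contrapositive.

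I do not expect a genuine obstacle here; the proposition is essentially a bookkeeping consequence of the inclusion $\circ\subseteq\circ_m$. The one place that needs a line of care is the induction step: one must check that in computing $x_1\circ\cdots\circ x_k$ the intermediate partial results stay defined and agree with the corresponding $\circ_m$-partial results, so that the final application of the ``$x\circ y$ defined $\Rightarrow x\circ y=x\circ_m y$'' remark is legitimate. Since $\circ$ is associative and each prefix catenation being defined is part of the hypothesis that $x_1\circ\cdots\circ x_k$ is defined, this is immediate. No properties of the particular decipherability notion are used beyond the fact that the witnessing sequences and the failure condition are transported verbatim, so the four cases are handled by one and the same argument.
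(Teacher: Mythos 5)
Your argument is correct and coincides with the paper's own proof: both take the contrapositive, observe that a defined $\circ$-catenation equals the corresponding $\circ_m$-catenation regardless of $m$, and transport the witnessing factorizations verbatim. The extra care you note about the induction over intermediate partial results is a fair point of rigor that the paper leaves implicit.
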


\begin{proof}
Assume $X$ is not a UD (resp.\ MSD, SD, ND) code. Then for some
$x_1,\ldots,x_k$, $y_1,\ldots,y_l\in X$ we have
$x_1\circ\cdots\circ x_k=y_1\circ\cdots\circ y_l$ with
$(x_1,\ldots,x_k)$ and $(y_1,\ldots,y_l)$ not satisfying the
final condition of the respective definition. But then,
irrespective of~$m$,
$x_1\circ_m\cdots\circ_m x_k=y_1\circ_m\cdots\circ_m y_l$
and $X$ is not a UD (resp.\ MSD, SD, ND) $m$-code.
\end{proof}

Note that the converse does not hold. A code may, for instance, fail to satisfy the 
UD $m$-code definition with 
$x_1\circ_m\cdots\circ_m x_k=y_1\circ_m\cdots\circ_m y_l$ 
and still be a UD code simply because some catenations in 
$x_1\circ\cdots\circ x_k$ and $y_1\circ\cdots\circ y_l$ are not defined.

\begin{example}
Take $X=\{x=
\begin{picture}(11,5)
    \dfimgbeginendlabel{00}{00}{a}
\end{picture}
\}$. $X$ is not a UD $m$-code, since $x\circ_m x=x$. It is a trivial UD code, though, because $x\circ x$ is not defined.
\end{example}

\begin{proposition}\label{prop:incl2}
Every UD code is an MSD code; every MSD code is an SD code and
an ND code. Every UD $m$-code is an MSD $m$-code; every MSD
$m$-code is an SD $m$-code and an ND $m$-code.
\end{proposition}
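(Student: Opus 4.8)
The plan is to derive these implications directly from the definitions of the four decipherability kinds, exploiting the standard hierarchy among multisets, sets and cardinalities. The key observation is that each of the four conditions on the pair of factorizations $(x_1,\ldots,x_k)$ and $(y_1,\ldots,y_l)$ is strictly weaker than the preceding one: equality of sequences implies equality of the underlying multisets; equality of multisets implies both equality of the underlying sets and equality of the cardinalities $k=l$. So whenever a factorization equality $x_1\bullet\cdots\bullet x_k = y_1\bullet\cdots\bullet y_l$ forces the stronger conclusion, it automatically forces the weaker one.

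First I would handle the non-merging case. Suppose $X$ is a UD code and fix arbitrary $x_1,\ldots,x_k,y_1,\ldots,y_l\in X$ with $x_1\circ\cdots\circ x_k = y_1\circ\cdots\circ y_l$. By the UD definition, $(x_1,\ldots,x_k)=(y_1,\ldots,y_l)$ as sequences, hence $k=l$ and $x_i=y_i$ for all $i$; this immediately yields $\{\!\!\{x_1,\ldots,x_k\}\!\!\}=\{\!\!\{y_1,\ldots,y_l\}\!\!\}$, so $X$ is an MSD code. Next, suppose $X$ is an MSD code and take the same kind of factorization equality; the MSD definition gives equality of the two multisets, and from a multiset one recovers both its support (so the sets $\{x_1,\ldots,x_k\}$ and $\{y_1,\ldots,y_l\}$ coincide) and its total size (so $k=l$). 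Hence $X$ is both an SD code and an ND code.

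For the merging case, the argument is verbatim the same, reading $\bullet$ as $\circ_m$ throughout: the definitions of UD/MSD/SD/ND $m$-code differ from their non-merging counterparts only in which catenation symbol is used, and the three set-theoretic facts (sequence equality $\Rightarrow$ multiset equality; multiset equality $\Rightarrow$ set equality; multiset equality $\Rightarrow$ equal cardinality) are purely combinatorial and do not depend on the catenation operation. In fact one could formulate a single proof using the $\bullet$ notation introduced in the preliminaries and obtain all six implications simultaneously.

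I do not expect any genuine obstacle here; the statement is a routine unwinding of definitions together with elementary facts about multisets. The only point requiring a modicum of care is to state explicitly the two elementary passages from multisets to sets and to cardinalities, since those are exactly the content of the SD and ND conclusions; everything else is immediate. Consequently the proof will be short and will simply invoke: equal sequences have equal underlying multisets, and equal multisets have equal underlying sets and equal cardinalities.
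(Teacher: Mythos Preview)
Your proposal is correct and matches the paper's approach; the paper's own proof is literally the single word ``Obvious.'' You have simply spelled out the evident chain of implications (sequence equality $\Rightarrow$ multiset equality $\Rightarrow$ set equality and equal cardinality) that the paper leaves implicit.
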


\begin{proof}
Obvious. 
\end{proof}

The diagram illustrates inclusions between different families of codes. 
A similar diagram can be made for $m$-codes.
Examples given below show that all those inclusions are strict.
\begin{displaymath}
\begin{array}{c}
\mbox{All codes}\\
\nearrow~~~~~~~\nwarrow\\
\mbox{SD}~~~~~~~~~~\mbox{ND}\\
\nwarrow~~~~~~~\nearrow \\
\mbox{MSD}\\
\uparrow\\
\mbox{UD}
\end{array}
\end{displaymath}


\begin{example}
Four codes depicted below are, respectively, UD, MSD, SD and ND codes and $m$-codes. 
They are proper, in the sense that the MSD code is not a UD code 
(since $x_1\circ x_2\circ x_3\circ x_4 = x_2\circ x_4\circ x_1\circ x_3$), 
and the SD and ND codes are not MSD codes (since 
$y_1\circ y_4\circ y_4\circ y_3\circ y_2 = y_2\circ y_3\circ y_1\circ y_3\circ y_4\circ y_1$
and $z_1\circ z_3 = z_2\circ z_1$). 
For the sake of simplicity, we show sets that could also serve as examples
for corresponding properties of word codes. In fact, the MSD and SD examples come 
from~\cite{Guzman}.\\[2ex]
\begin{center}
\begin{tabular}{ll}
UD &
$w_1=\begin{picture}(20,5)
    \dfimgbeginlabel{00}{00}{a}
    \dfimgend{10}{00}
\end{picture}~~~
w_2=\begin{picture}(30,5)
    \dfimgbeginlabel{00}{00}{b}
    \dfimglabel{10}{00}{a}
    \dfimgend{20}{00}
\end{picture}~~~
w_3=\begin{picture}(30,5)
    \dfimgbeginlabel{00}{00}{b}
    \dfimglabel{10}{00}{b}
    \dfimgend{20}{00}
\end{picture}$\\[1ex]
MSD &
$x_1=\begin{picture}(40,5)
    \dfimgbeginlabel{00}{00}{a}
    \dfimglabel{10}{00}{a}
    \dfimglabel{20}{00}{b}
    \dfimgend{30}{00}
\end{picture}~~~
x_2=\begin{picture}(60,5)
    \dfimgbeginlabel{00}{00}{a}
    \dfimglabel{10}{00}{a}
    \dfimglabel{20}{00}{b}
    \dfimglabel{30}{00}{a}
    \dfimglabel{40}{00}{a}
    \dfimgend{50}{00}
\end{picture}~~~
x_3=\begin{picture}(40,5)
    \dfimgbeginlabel{00}{00}{a}
    \dfimglabel{10}{00}{b}
    \dfimglabel{20}{00}{a}
    \dfimgend{30}{00}
\end{picture}~~~
x_4=\begin{picture}(90,5)
    \dfimgbeginlabel{00}{00}{b}
    \dfimglabel{10}{00}{a}
    \dfimglabel{20}{00}{a}
    \dfimglabel{30}{00}{a}
    \dfimglabel{40}{00}{b}
    \dfimglabel{50}{00}{a}
    \dfimglabel{60}{00}{b}
    \dfimglabel{70}{00}{a}
    \dfimgend{80}{00}
\end{picture}$\\[1ex]
SD &
$y_1=\begin{picture}(20,5)
    \dfimgbeginlabel{00}{00}{a}
    \dfimgend{10}{00}
\end{picture}~~~
y_2=\begin{picture}(40,5)
    \dfimgbeginlabel{00}{00}{b}
    \dfimglabel{10}{00}{a}
    \dfimglabel{20}{00}{b}
    \dfimgend{30}{00}
\end{picture}~~~
y_3=\begin{picture}(60,5)
    \dfimgbeginlabel{00}{00}{a}
    \dfimglabel{10}{00}{a}
    \dfimglabel{20}{00}{b}
    \dfimglabel{30}{00}{a}
    \dfimglabel{40}{00}{a}
    \dfimgend{50}{00}
\end{picture}~~~
y_4=\begin{picture}(70,5)
    \dfimgbeginlabel{00}{00}{a}
    \dfimglabel{10}{00}{b}
    \dfimglabel{20}{00}{a}
    \dfimglabel{30}{00}{a}
    \dfimglabel{40}{00}{b}
    \dfimglabel{50}{00}{a}
    \dfimgend{60}{00}
\end{picture}$\\[1ex]
ND &
$z_1=\begin{picture}(20,5)
    \dfimgbeginlabel{00}{00}{a}
    \dfimgend{10}{00}
\end{picture}~~~
z_2=\begin{picture}(30,5)
    \dfimgbeginlabel{00}{00}{a}
    \dfimglabel{10}{00}{b}
    \dfimgend{20}{00}
\end{picture}~~~
z_3=\begin{picture}(30,5)
    \dfimgbeginlabel{00}{00}{b}
    \dfimglabel{10}{00}{a}
    \dfimgend{20}{00}
\end{picture}$
\end{tabular}
\end{center}
\end{example}
\medskip

Before proceeding with the main decidability results, note that
for UD, MSD and ND $m$-codes there is an ``easy case'' that can
be verified quickly just by analyzing the translation vectors of
figures. This is reflected in Theorem~\ref{thm:easyNonCodes}.

\begin{definition}[Two-sided and one-sided codes]
Let $X=\{x_1,\ldots,x_n\}$ be a code. If there exist non-negative integers
$\alpha_1,\ldots,\alpha_n$, not all equal to zero, such
that $\sum_{i=1}^n\alpha_i\dftransition{x_i}=(0,0)$,
then $X$ is called \emph{two-sided}. 
Otherwise, $X$~is called \emph{one-sided}.
\end{definition}

This condition can be interpreted geometrically as follows:
Translation vectors of a two-sided code do not fit in an open
half-plane. For a one-sided code, there exists a line passing
through $(0,0)$ such that all translation vectors are on one
side of it. Equivalently, there exists $\tau\in\integers^2$ such that 
the scalar products $\tau\cdot\dftransition{x_i}$ are all positive.

\begin{example}
The following set of figures is a two-sided code, with translation vectors 
$(1,2)$, $(1,-2)$ and $(-2,0)$:
\begin{displaymath}
\begin{picture}(20,25)
    \dfimgbeginlabel{00}{00}{a}
    \dfimglabel{10}{00}{b}
    \dfimglabel{10}{10}{c}
    \dfimgend{10}{20}
\end{picture}~~~~~
\begin{picture}(20,25)
    \dfimgbeginlabel{00}{20}{a}
    \dfimglabel{00}{10}{b}
    \dfimglabel{10}{10}{c}
    \dfimgend{10}{00}
\end{picture}~~~~~
\begin{picture}(30,25)
    \dfimgendlabel{00}{00}{a}
    \dfimglabel{10}{00}{b}
    \dfimgbegin{20}{00}
\end{picture}
\end{displaymath}
It is a one-sided code, if the rightmost figure is removed.
\end{example}

\begin{theorem}[Necessary condition]\label{thm:easyNonCodes}
A two-sided code is not an ND $m$-code (and consequently neither 
an MSD nor UD $m$-code).
\end{theorem}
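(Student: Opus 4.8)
The plan is to show that a two-sided code $X=\{x_1,\ldots,x_n\}$ admits two $m$-products of \emph{different} lengths that coincide as directed figures; this violates the ND $m$-code condition directly, and the failure of the MSD and UD $m$-code conditions then follows from Proposition~\ref{prop:incl2} (an ND $m$-code is the weakest of the three). Throughout we use the fact, noted after the definition of $m$-catenation, that $\circ_m$ is always defined, so there are no domain-disjointness side conditions to check.

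First I would use two-sidedness to manufacture a single figure with zero translation vector. Choose non-negative integers $\alpha_1,\ldots,\alpha_n$, not all zero, with $\sum_{i=1}^n\alpha_i\dftransition{x_i}=(0,0)$, set $N=\sum_{i=1}^n\alpha_i\geq 1$, and let $c$ be the $m$-catenation of $\alpha_1$ copies of $x_1$, then $\alpha_2$ copies of $x_2$, and so on through $\alpha_n$ copies of $x_n$. Since $m$-catenation adds translation vectors and $\circ_m$ is associative, $c\in\DFM{X}{m}$ is an $m$-product of exactly $N$ figures of $X$ and $\dftransition{c}=\sum_i\alpha_i\dftransition{x_i}=(0,0)$, i.e.\ $\dfend{c}=\dfbegin{c}$.

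Next I would analyse the iterated self-$m$-catenation of $c$. Because $\dftransition{c}=(0,0)$, in $c\circ_m c$ the second copy is translated by $\dfend{c}-\dfbegin{c}=(0,0)$, so the two domains are literally equal and every cell falls in their overlap; hence the third clause of the $m$-catenation definition applies everywhere. Inductively, the $m$-catenation of $j$ copies of $c$, call it $c^{(j)}$, equals $(\dfdomain{c},\dfbegin{c},\dfbegin{c},\ell_j)$ where $\ell_j(z)$ is the $j$-fold product of $\ell_c(z)$ with itself in the finite associative semigroup $(\Sigma,m)$; in particular $c^{(M)}=c^{(M')}$ as directed figures whenever $a^{M}=a^{M'}$ in $(\Sigma,m)$ for every $a\in\Sigma$. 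Since the map from the positive integers to the finite set $\Sigma^{\Sigma}$ sending $j$ to the function $a\mapsto a^{j}$ cannot be injective, there are $M<M'$ with this property. Then $c^{(M)}=c^{(M')}$, while $c^{(M)}$ is an $m$-product of $MN$ figures of $X$ and $c^{(M')}$ an $m$-product of $M'N$ figures of $X$, and $MN\neq M'N$ because $M\neq M'$ and $N\geq 1$. Thus $X$ is not an ND $m$-code, and the theorem follows.

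The main obstacle is the middle step: one must check carefully, straight from the definition of $m$-catenation, that a figure of zero translation vector merges with each successive copy of itself on exactly the same domain, so that iterating the catenation amounts to applying the associative operation $m$ pointwise to labels — this is precisely where $\dftransition{c}=(0,0)$ is used, and it is what turns the finiteness of $\Sigma$ into a forced repetition of figures. The remaining ingredients — additivity of translation vectors along a catenation and the pigeonhole argument in $(\Sigma,m)$ — are routine bookkeeping.
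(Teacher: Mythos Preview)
Your argument is correct and follows essentially the same route as the paper: build a single $m$-product $c$ with $\dftransition{c}=(0,0)$ from the two-sidedness coefficients, observe that all $m$-powers of $c$ share the same domain, and apply a pigeonhole argument to force $c^{(M)}=c^{(M')}$ with $M\neq M'$. The paper pigeonholes directly on the finite set of labellings $\Sigma^{\dfdomain{c}}$, whereas you pigeonhole on the power maps $j\mapsto(a\mapsto a^{j})$ in $\Sigma^{\Sigma}$ after making explicit that the $j$-th power label at each cell is the $j$-fold $m$-product of the original label; this is a slightly sharper bookkeeping of the same idea, not a different approach.
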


\begin{proof}
Assume $X=\{x_1,\ldots,x_n\}$ is two-sided, hence there exist
non-negative integers $\alpha_1,\ldots,\alpha_n$, not all equal to zero, 
such that $\sum_{i=1}^n\alpha_i\dftransition{x_i}=(0,0)$.
Let
\begin{displaymath}
    x=\underbrace{x_1\circ_m\cdots\circ_m x_1}_{\alpha_1\mathrm{~times}}\circ_m
      \underbrace{x_2\circ_m\cdots\circ_m x_2}_{\alpha_2\mathrm{~times}}\circ_m
      \cdots\circ_m
      \underbrace{x_n\circ_m\cdots\circ_m x_n}_{\alpha_n\mathrm{~times}}.
\end{displaymath}
Now consider the powers of $x$ (with respect to $\circ_m$),
$x^i$ for $i\ge 1$. Since $\dftransition{x}=(0,0)$, each of the powers
has the same domain. There is only a finite number of possible
labellings of this domain, which implies that regardless of the
merging function and labelling of~$x$, there exist
$p,q\in\naturals$, $p \neq q$ such that $x^p=x^q$. Hence $X$ is
not an ND $m$-code.
\end{proof}

\begin{corollary}
An ND $m$-code is one-sided.
\end{corollary}

\section{Decidability of verification}\label{sec:dec}

In this section we summarize all non-trivial decidability
results for the decipherability verification. We aim to prove
the decidability status for each combination of the
following orthogonal criteria: catenation type (with or without
a merging function), decipherability kind (UD, MSD, SD, ND) and
code geometry (one-sided, two-sided, two-sided with parallel
translation vectors). Two combinations remain open, however.

Proofs that have already appeared in our previous work and algorithms are omitted; 
references to respective papers are given. Note,
however, that in all decidable non-trivial cases there exist
algorithms to test the decipherability in question; the
algorithms effectively find a double factorization of a figure
if the answer is negative.

\subsection{Positive decidability results}

\begin{proposition}[see \cite{KolMoc}, Section 4]\label{prop:UDm}
Let $X$ be a one-sided code over~$\Sigma$. It is decidable whether $X$
is a UD $m$-code.
\end{proposition}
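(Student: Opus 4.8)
The plan is to reduce the verification of the UD $m$-code property for a one-sided code to a finiteness/reachability problem on a suitable graph, in the spirit of the classical Sardinas--Patterson algorithm and the ``domino graph'' technique of \cite{HeadWeber2}. The key geometric observation, made possible by the one-sidedness hypothesis, is the following. Since $X$ is one-sided, by the discussion after the definition there exists $\tau\in\integers^2$ with $\tau\cdot\dftransition{x}>0$ for every $x\in X$. Consequently, in any $m$-catenation $x_1\circ_m\cdots\circ_m x_k$ the partial end points $e_i=\dfbegin{x_1}+\sum_{j\le i}\dftransition{x_j}$ have strictly increasing $\tau$-coordinate, so the ``overlap'' region between a prefix factorization and a competing one is always confined to a bounded strip: if $x_1\circ_m\cdots\circ_m x_k=y_1\circ_m\cdots\circ_m y_l$, then at each stage the difference between the current end point of the $x$-side and that of the $y$-side is a vector whose $\tau$-component is bounded by $\max_{x\in X}|\tau\cdot\dftransition{x}|$, hence the difference vector itself ranges over a finite set (its component along $\tau$ is bounded, and its component is pinned down by the requirement that the two partial figures agree on their common domain, which has bounded diameter relative to the figures of $X$).

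First I would make this precise by defining a \emph{configuration} to consist of (i) the current ``residual'' figure — the part of one side's partial catenation that sticks out beyond the other side's end point, recorded up to translation — together with (ii) a bit indicating which side is ahead. One checks that the residual figure always has domain contained in a ball of radius bounded in terms of $\max_{x\in X}\mathrm{diam}(\dfdomain{x})$ and the bound on the difference vector above; since $\Sigma$ is finite, there are only finitely many configurations. Next I would describe the transition relation: from a configuration one may append a figure from $X$ to the lagging side (and, if catenation domains collide non-trivially but labels disagree under $m$, the branch dies; where they agree, one updates the residual). The initial configurations are those obtained from $x_i$ and $y_j$ with $x_i\neq y_j$ but one a ``translate-prefix'' of the other in the appropriate sense; the accepting configuration is the empty residual (equal end points). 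Then $X$ fails to be a UD $m$-code if and only if the empty-residual configuration is reachable from some initial configuration in this finite graph — which is decidable, and moreover a witnessing path yields an explicit double factorization, as promised in the remark preceding the subsection.

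The main obstacle, and the step requiring the most care, is verifying that the residual-figure state space is genuinely \emph{finite} and correctly captures enough information: one must check that when a new figure is $m$-catenated onto the lagging side, the consistency of the merged labels on the overlap with the already-committed labels on the leading side depends only on the residual figure and the new figure (not on the forgotten history), and that the $\tau$-bound on the end-point difference really does bound the whole difference vector — this uses that two figures which are $m$-catenation prefixes of a common figure must share the labelling on their common domain, forcing their domains (hence end points) to be close. A secondary subtlety is that $\circ_m$, unlike word concatenation, can fail to extend a partial agreement even when the $\tau$-geometry looks fine (overlapping domains with incompatible structure), so the transition must correctly prune such branches; handling this does not affect decidability but must be stated carefully. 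Since this proposition is quoted from \cite{KolMoc}, Section~4, I would at this point simply cite that construction for the full details rather than reproduce the bookkeeping.
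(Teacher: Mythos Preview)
Your approach is essentially the paper's: exploit one-sidedness via a direction $\tau$ with $\tau\cdot\dftransition{x}>0$, build a finite configuration graph in the Sardinas--Patterson style, and test reachability of a ``matching'' state. The paper (which proves this in full as Theorem~\ref{th:oneSide}, subsuming the proposition) packages the state differently: rather than a single ``residual figure sticking out beyond the other side's end point,'' it records for each side the end point together with the labelling restricted to the region that can still be altered by further catenations, that region being defined via half-planes $\CEP{\cdot}$, $\CEN{\cdot}$ determined by the extremal translation vectors of~$X$. In two dimensions neither partial catenation need be contained in the other, so the word-style ``residual'' is not well-defined; the paper's reduced configuration is the correct replacement.

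One step in your sketch is not right as stated. You argue that the perpendicular component of the end-point difference is bounded because ``two figures which are $m$-catenation prefixes of a common figure must share the labelling on their common domain, forcing their domains (hence end points) to be close.'' But two partial $m$-catenations can have \emph{disjoint} domains while both being prefixes of a common final figure (one drifts up-right, the other down-right), so agreement on a possibly empty common domain cannot by itself bound the end-point separation. The paper obtains the perpendicular bound differently: it introduces cone-like regions $\CWP{\cdot}$ and $\CEP{\cdot}$ bounded by the directions of the extremal translation vectors $\dftransition{x_1}$ and $\dftransition{x_n}$, and shows (properties~(\ref{prop:vertspan1}) and~(\ref{prop:vertspan2})) that in any \emph{proper} configuration the end point of each side must lie within a fixed-radius ball of the other side's cone. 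That geometric argument is where one-sidedness does real work beyond mere monotonicity along~$\tau$, and it is exactly the ``main obstacle'' you flagged; your deferral to the cited construction is appropriate, but the heuristic you gave for it does not survive as written.
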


\begin{proposition}[see \cite{KolRAIRO}, Section 3]\label{prop:UD}
Let $X$ be a one-sided code over~$\Sigma$. It is decidable whether $X$
is a UD code.
\end{proposition}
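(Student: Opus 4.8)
The plan is to reduce UD-code verification for a one-sided code $X$ to a finite reachability problem on a graph of ``overhangs'', in the spirit of the classical Sardinas–Patterson construction and of the domino-graph technique of \cite{HeadWeber2}. The key point exploited throughout is one-sidedness: by the remark following Definition~(Two-sided and one-sided codes), there is a vector $\tau\in\integers^2$ with $\tau\cdot\dftransition{x_i}>0$ for every $x_i\in X$. Consequently, along any catenation $x_{i_1}\circ\cdots\circ x_{i_k}$ the scalar product $\tau\cdot\dfbegin{\cdot}$ of the successive pieces is strictly increasing, so in a putative double factorization $x_1\circ\cdots\circ x_k=y_1\circ\cdots\circ y_l$ the start points of the two factorizations interleave in a controlled, monotone way. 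This is what prevents the ``drift'' that makes the two-sided and the general polyomino cases undecidable.

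\textbf{Step 1: bounding the overhang.} First I would show that whenever $x_1\circ\cdots\circ x_p$ and $y_1\circ\cdots\circ y_q$ are two catenations from $X$ that agree on the common part of their domains and the start points $\dfbegin{x_1}$, $\dfbegin{y_1}$ coincide (we may translate so this holds), the \emph{difference} between the two ``current end points'' $\tr{}$-positions lies in a fixed finite set. Concretely: because each figure has bounded domain and bounded $\|\dftransition{x_i}\|$, and because $\tau\cdot(\cdot)$ is strictly increasing, the lagging factorization can never fall behind the leading one by more than a constant $C$ depending only on $X$ (otherwise the overshoot of one figure would have to bridge a gap that no single figure in $X$ can cover while remaining consistent, or else the domains would already have to disagree). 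So the relevant ``state'' — the residual figure sticking out beyond the shorter factorization, together with its labelling — ranges over a finite, effectively computable set $S$ of directed figures whose domains sit inside a fixed ball $\B{0}{C}$.

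\textbf{Step 2: the configuration graph.} I would then build a finite directed graph whose vertices are the elements of $S$ (the possible overhangs, ``owned'' by the $x$-side or the $y$-side), with a distinguished initial vertex $\varepsilon$ and edges corresponding to appending one more figure of $X$ to whichever side is currently behind, updating the overhang and checking label compatibility on the freshly overlapped cells. A genuine double factorization witnessing failure of unique decipherability corresponds exactly to a path from $\varepsilon$ back to $\varepsilon$ that is nontrivial in the appropriate sense (the two factor sequences differ) — this mirrors Proposition~\ref{prop:incl1}'s style of argument and the standard Sardinas–Patterson criterion. Since the graph is finite and effectively constructible, reachability of such a path is decidable; moreover a negative answer yields an explicit witness, as promised in the section preamble. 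One subtlety to handle carefully is distinguishing ``the sequences are literally different'' from mere re-bracketing, and making sure the first edge already records a discrepancy (e.g. $x_1\neq y_1$ or one is a proper prefix of the other), exactly as in the word case.

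\textbf{The hard part} will be Step~1: proving rigorously that the overhang stays bounded, i.e.\ that one-sidedness really does force the two factorizations to stay ``in sync'' up to a constant. The delicate case is when a single large figure $y_j$ is matched against many small figures $x_i$ (or vice versa): one must argue that even then the portion of $y_j$ not yet covered by the $x$-side has domain confined to a bounded window around the current frontier, using that every $x_i$ advances $\tau\cdot(\cdot)$ by at least some $\delta>0$ while having domain of bounded diameter, so the uncovered region of $y_j$ cannot extend arbitrarily far forward without the next $x_i$ being forced (and its position determined up to finitely many choices). Once this a priori bound is in hand, the finiteness of $S$ and hence decidability follow routinely; since this result is quoted from \cite{KolRAIRO}, the write-up would cite that paper for the full details rather than reproduce them.
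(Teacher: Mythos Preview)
Your high-level strategy coincides with the paper's: bound the space of ``live'' configurations near the frontier of two partial factorizations and reduce UD-testing to a finite reachability problem. The paper does not reprove Proposition~\ref{prop:UD} separately (it cites \cite{KolRAIRO}), but its proof of the generalization in Theorem~\ref{th:oneSide} is exactly a Sardinas--Patterson-style configuration argument of the kind you sketch.

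Where the two write-ups diverge is in how Step~1 is made precise, and your instinct that this is the hard part is correct. Your formulation --- the overhang lives in a fixed ball $\B{0}{C}$ --- and the supporting heuristic about a minimum advance~$\delta$ in the $\tau$-direction handle the $\tau$-component of $e_x-e_y$ easily (the paper states this bound explicitly: $|\tau_E\cdot(\dfend{L_{\bullet}(C)}-\dfend{R_{\bullet}(C)})|\leq|\tau_E|^2$). The delicate point is the \emph{perpendicular} component: with translation vectors such as $(1,1)$ and $(1,-1)$, two prefixes can in principle drift arbitrarily far apart transversely, and your text does not say why this cannot happen in a \emph{proper} configuration. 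The paper closes this gap not with a ball but with explicit half-plane regions $\CEP{\cdot}$, $\CWP{\cdot}$ built from vectors $\tau_N,\tau_S$ perpendicular to the extreme translation vectors of~$X$; properties~(\ref{prop:vertspan1})--(\ref{prop:vertspan2}) then force each end point to lie within distance~$\rho$ of the other side's admissible cone, which bounds the perpendicular span. The ``reduced configuration'' that the paper keeps is the labelling restricted to the region not yet in $\CEN{L_\bullet(C)}\cap\CEN{R_\bullet(C)}$, which is the rigorous analogue of your ``residual figure''. If you want to turn your outline into a self-contained proof, this geometric bounding of the transverse drift is the piece you would need to supply.
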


Generalizing Propositions~\ref{prop:UDm} and~\ref{prop:UD}, we obtain a 
similar result for one-sided MSD, SD and ND codes and $m$-codes.

\begin{theorem}\label{th:oneSide}
Let $X$ be a one-sided code over~$\Sigma$. It is decidable whether $X$
is a \{UD, MSD, SD or ND\} \{code or $m$-code\}.
\end{theorem}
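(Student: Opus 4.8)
The plan is to reduce the MSD, SD and ND cases (and their $m$-code variants) to the already-established UD cases of Propositions~\ref{prop:UDm} and~\ref{prop:UD}. Since $X$ is one-sided, fix $\tau\in\integers^2$ with $\tau\cdot\dftransition{x_i}>0$ for every $x_i\in X$; this gives a well-defined, strictly positive ``weight'' $w(x_i)=\tau\cdot\dftransition{x_i}$ on codewords. The key consequence is that any figure $z\in\DFX{X}$ has a translation vector $\dftransition{z}$ whose $\tau$-weight equals the sum of the weights of the codewords used, and in particular this sum is bounded in terms of $\dftransition{z}$ alone. So for a fixed target figure $z$, only finitely many multisets of codewords can possibly factor $z$, and each such multiset admits only finitely many orderings; moreover, given a candidate ordering $(x_{i_1},\dots,x_{i_k})$ one can effectively check whether $x_{i_1}\bullet\cdots\bullet x_{i_k}$ is defined and equals $z$, because catenation is computable. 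This finiteness/effectiveness is the engine behind all four decipherability kinds.

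The next step is to turn this into an actual decision procedure. For the UD $m$-code / UD code cases we simply invoke Propositions~\ref{prop:UDm} and~\ref{prop:UD}. For the remaining cases, I would observe that each of MSD, SD, ND decipherability fails exactly when there is a figure $z$ with two factorizations $x_{i_1}\bullet\cdots\bullet x_{i_k}=y_{j_1}\bullet\cdots\bullet y_{j_l}=z$ violating the relevant equivalence (equal multisets / equal sets / equal length). The standard tool is the domino-graph / path construction (as in \cite{HeadWeber2} and \cite{WMiwoca2013}): build a finite graph whose vertices record the ``overhang'' between a partial left factorization and a partial right factorization — here an overhang is a directed figure describing the part already produced by one side but not yet matched by the other, together with the accumulated bookkeeping needed for the decipherability kind in question. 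One-sidedness guarantees that overhangs cannot grow without bound in the $\tau$-direction, so only finitely many overhang-shapes arise; this is precisely what makes the graph finite and the reachability question decidable. A double factorization exists iff there is a path from the initial (empty-overhang, both sides about to start) vertex back to an accepting (empty-overhang) vertex that is ``nontrivial'' in the appropriate sense.

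The bookkeeping differs per kind, and this is where the proof does real work. For ND, besides the overhang figure one tracks the difference $(\text{number of left codewords})-(\text{number of right codewords})$; failure corresponds to closing the overhang with this counter nonzero. For MSD, one tracks the signed multiset of codewords used so far (left minus right) as a vector in $\integers^{|X|}$ — but naively this is unbounded, so instead one adapts Restivo/Blanchet-Sadri-style arguments: it suffices to look for a closed overhang path along which the multiset difference is nonzero, and by a pumping argument one may bound the search, or one reduces to checking, for each pair of codewords, whether a ``conflict pattern'' propagates — exactly as in the word-code decidability results of \cite{Guzman,Restivo,BS1}, which transfer because the one-sided figure monoid embeds order-theoretically into $(\naturals,+)$ via $w$. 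For SD one tracks the pair of \emph{sets} of codewords used, a finite amount of data, so the graph is outright finite. In every case the ``algorithm effectively finds a double factorization if the answer is negative'' claim from the section preamble is met by reading off the witnessing path.

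The main obstacle I expect is the MSD case: keeping the multiset-difference bookkeeping finite. The honest way around it is not to store the full difference vector but to exploit that MSD-failure, like UD-failure, is detected by the same finite overhang graph used for UD, augmented only by a consistency check that can be phrased via a system of linear Diophantine conditions on cycle-counts in that graph — solvable because the graph is finite. So the real content is: (i) show one-sidedness makes the overhang graph finite (immediate from the $\tau$-weight bound, essentially the argument already used in Theorem~\ref{thm:easyNonCodes} run in reverse); (ii) show each decipherability kind is equivalent to an effectively checkable path/cycle property of that graph; (iii) invoke the cited word-code results where the reduction to words is cleanest (MSD, SD) and the cited figure-code results for UD. Everything else is routine verification that catenation and equality of figures are computable, which is clear from their definitions.
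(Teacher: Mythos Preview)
Your high-level plan --- build a finite overhang/configuration graph and then read off each decipherability kind as a path property --- is the right shape, and it is also what the paper does. But two of the load-bearing steps in your sketch do not go through as written.

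\medskip
\textbf{Finiteness of overhangs is not ``immediate from the $\tau$-weight bound''.} The weight $w(x)=\tau\cdot\dftransition{x}$ bounds how far the two partial factorizations can drift apart \emph{in the $\tau$-direction}: at any stage $|\tau\cdot(\dfend{L_\bullet}-\dfend{R_\bullet})|$ is at most the maximum weight of a single codeword. It says nothing about the extent of the unmatched region \emph{perpendicular} to $\tau$. For directed figures the overhang is not just an endpoint offset; it is a labelled region (what one side has written that the other has not yet matched), and that region can in principle grow without bound orthogonally to $\tau$ even while the $\tau$-offset stays small. The paper spends essentially all of its effort on exactly this point: it introduces the cones $\CEP{\cdot}$, $\CWP{\cdot}$ built from the extremal translation vectors $\dftransition{x_1},\dftransition{x_n}$, proves that for a \emph{proper} configuration the end points of the two sides must lie within distance $\rho$ of each other's cones (properties~(\ref{prop:vertspan1}) and~(\ref{prop:vertspan2})), and only then concludes that reduced configurations are finite up to translation. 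Your sentence ``essentially the argument already used in Theorem~\ref{thm:easyNonCodes} run in reverse'' does not supply this; Theorem~\ref{thm:easyNonCodes} is a one-line pigeonhole on a single fixed domain and contains no perpendicular-bound argument.

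\medskip
\textbf{The transfer from word codes is not justified.} The map $w:\DFX{X}\to\naturals$ is a monoid homomorphism, not an embedding: many distinct figures share a weight, and equality of figures is not detected by $w$. So the statement that the MSD/SD results of Guzm\'an, Restivo, Blanchet-Sadri ``transfer because the one-sided figure monoid embeds order-theoretically into $(\naturals,+)$ via $w$'' is false as stated, and none of your MSD reasoning can rest on it. The paper does not attempt any such reduction; it works directly with the finite reduced-configuration graph and checks, for each terminal configuration reached, the multiset/set/cardinality condition on the sequences that led there. (Your fallback suggestion --- phrase MSD/ND failure as a linear condition on cycle counts in the finite graph --- is viable once the graph really is finite, but you should say this is the argument rather than appealing to word-code literature.)

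\medskip
In short: the paper's proof and yours share the same architecture, but the paper supplies the one genuinely nontrivial ingredient --- the geometric bound on the perpendicular span of proper configurations --- that your sketch asserts without proof. Supplying that bound \emph{is} the theorem; once you have it, the rest (including your Diophantine/cycle-count remark for MSD and ND) is routine.
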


\begin{proof}
Starting with observations that allow us to
construct a ``bounding area'' for figures, we proceed with
properties that imply finiteness of possible configuration sets and,
consequently, decidability of the problem in question.

Let $X=\{x_1,\ldots,x_n\}\subseteq\dfplus{\Sigma}$ and let
$\dfbegin{x}=(0,0)$ for each $x\in X$. Since $X$ is one-sided,
there exists a vector $\tau$ such that for all $x\in X$,
\begin{displaymath}
\tau\cdot\dftransition{x}>0.
\end{displaymath}
We can assume that figures are sorted with respect to the angle of their translation
vectors in the following way:
\begin{displaymath}
\angle(\R{-\frac{\pi}{2}}{\tau},\dftransition{x_1}) ~~~\leq~~~
\angle(\R{-\frac{\pi}{2}}{\tau},\dftransition{x_2}) ~~~\leq~~~\ldots~~~\leq~~~
\angle(\R{-\frac{\pi}{2}}{\tau},\dftransition{x_n}).
\end{displaymath}

We choose constants $r_E,r_N,r_W,r_S > 0$ such that the vectors
\begin{eqnarray*}
\tau_E &=& r_E\tau,\\
\tau_N &=& r_N \R{\frac{\pi}{2}}{\dftransition{x_n}},\\
\tau_W &=& -r_W\tau,\\
\tau_S &=& r_S \R{-\frac{\pi}{2}}{\dftransition{x_1}}
\end{eqnarray*}
define a ``bounding area'' for figures in $X$, \textit{i.e.}, for all $x\in X$,
\begin{displaymath}
   \dfdomain{x}\cup\{\dfend{x}\}~~~\subseteq~~~
   \bigcap_{u \in\{\tau_E,\tau_N,\tau_W,\tau_S\}}\{\HP{u}{\dfbegin{x}}\}.
\end{displaymath}
The choice of $\tau$ determines a ``central axis'' along which figures will be
catenated. This is the line that bisects the half-plane containing all translation
vectors of figures in~$X$. Note that in all examples, $\tau$ and  $\tau_E$ are drawn 
as horizontal pointing eastwards, giving the natural meaning to the subscripts of  
$\tau_E$, $\tau_N$, $\tau_W$ and~$\tau_S$ vectors. The ordering of translation vectors
of figures in~$X$ is thus from the ``southernmost'' to ``northernmost''.

For $x\in\DFX{X}$ define
\begin{eqnarray*}
\CEP{x} &=& \HP{\tau_S}{\dfend{x}}\cap\HP{\tau_N}{\dfend{x}}\cap\HP{\tau_W}{\dfend{x}},\\
\CEN{x} &=& \integers^2\setminus\CEP{x},\\
\CWP{x} &=& \bigcup_{v}\{v+(\CEP{x}\cap\HP{\tau_E}{\dfend{x}})\},\\
\CWN{x} &=& \integers^2\setminus\CWP{x},
\end{eqnarray*}
where the union in the definition of $\CWP{x}$ is taken over $v\in\integers^2$ 
lying within an angle
spanned by vectors $-\dftransition{x_1}$ and~$-\dftransition{x_n}$.
Note that each term of the union is a trapezoid, resulting from the intersection
of four half-planes; see Figure~\ref{fig:hps} and Figure~\ref{fig:cwce}.

 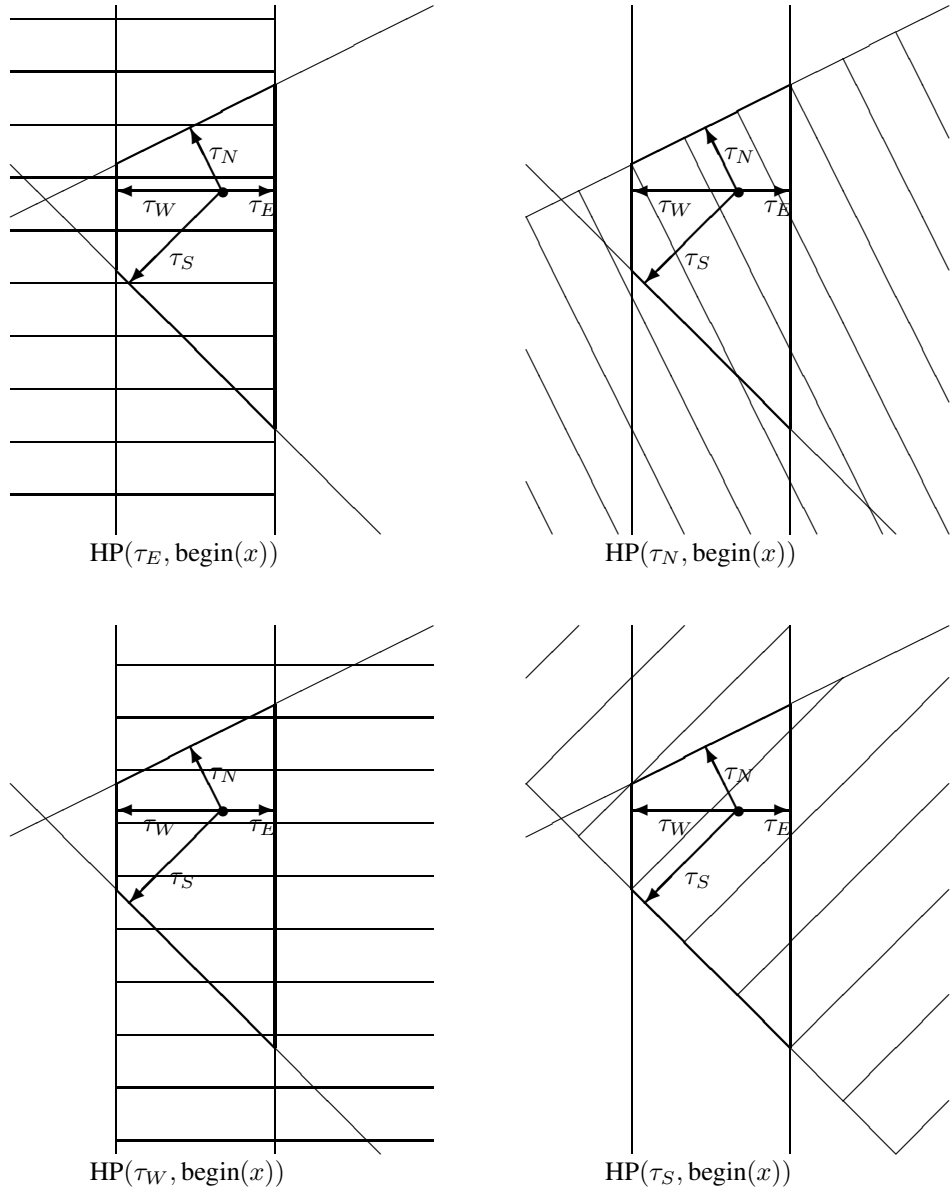
\begin{figure}[htp]
 \begin{center}
 \begin{picture}(100,220)
 \thinlines

 \put(-30,120){\line(2,1){160}}
 \put(10,200){\line(0,-1){200}}
 \put(70,200){\line(0,-1){200}}
 \put(-30,140){\line(1,-1){140}}

 \put(70,15){\line(-1,0){100}}
 \put(70,35){\line(-1,0){100}}
 \put(70,55){\line(-1,0){100}}
 \put(70,75){\line(-1,0){100}}
 \put(70,95){\line(-1,0){100}}
 \put(70,115){\line(-1,0){100}}
 \put(70,135){\line(-1,0){100}}
 \put(70,155){\line(-1,0){100}}
 \put(70,175){\line(-1,0){100}}
 \put(70,195){\line(-1,0){100}}

 \thicklines

 \put(10,100){\line(0,1){40}}
 \put(10,100){\line(1,-1){60}}
 \put(10,140){\line(2,1){60}}
 \put(70,40){\line(0,1){130}}

 \put(48,127){$\bullet$}

 \put(50,130){\vector(1,0){20}}\put(60,122){{$\tau_E$}}
 \put(50,130){\vector(-1,2){12}}\put(45,142){{$\tau_N$}}
 \put(50,130){\vector(-1,0){40}}\put(20,122){{$\tau_W$}}
 \put(50,130){\vector(-1,-1){35}}\put(30,102){{$\tau_S$}}

 \put(0,-10){$\HP{\tau_E}{\dfbegin{x}}$}

 \thinlines


 \end{picture}
 \hspace{20ex}
 \begin{picture}(100,220)
 \thinlines

 \put(-30,120){\line(2,1){160}}
 \put(10,200){\line(0,-1){200}}
 \put(70,200){\line(0,-1){200}}
 \put(-30,140){\line(1,-1){140}}

 \put(-30,20){\line(1,-2){10}}
 \put(-30,70){\line(1,-2){35}}
 \put(-30,120){\line(1,-2){60}}
 \put(-10,130){\line(1,-2){65}}
 \put(10,140){\line(1,-2){70}}
 \put(30,150){\line(1,-2){75}}
 \put(50,160){\line(1,-2){80}}
 \put(70,170){\line(1,-2){60}}
 \put(90,180){\line(1,-2){40}}
 \put(110,190){\line(1,-2){20}}

 \thicklines

 \put(10,100){\line(0,1){40}}
 \put(10,100){\line(1,-1){60}}
 \put(10,140){\line(2,1){60}}
 \put(70,40){\line(0,1){130}}

 \put(48,127){$\bullet$}

 \put(50,130){\vector(1,0){20}}\put(60,122){{$\tau_E$}}
 \put(50,130){\vector(-1,2){12}}\put(45,142){{$\tau_N$}}
 \put(50,130){\vector(-1,0){40}}\put(20,122){{$\tau_W$}}
 \put(50,130){\vector(-1,-1){35}}\put(30,102){{$\tau_S$}}

 \put(0,-10){$\HP{\tau_N}{\dfbegin{x}}$}

 \thinlines


 \end{picture}\\[3ex]
 \begin{picture}(100,220)
 \thinlines

 \put(-30,120){\line(2,1){160}}
 \put(10,200){\line(0,-1){200}}
 \put(70,200){\line(0,-1){200}}
 \put(-30,140){\line(1,-1){140}}

 \put(10,5){\line(1,0){120}}
 \put(10,25){\line(1,0){120}}
 \put(10,45){\line(1,0){120}}
 \put(10,65){\line(1,0){120}}
 \put(10,85){\line(1,0){120}}
 \put(10,105){\line(1,0){120}}
 \put(10,125){\line(1,0){120}}
 \put(10,145){\line(1,0){120}}
 \put(10,165){\line(1,0){120}}
 \put(10,185){\line(1,0){120}}

 \thicklines

 \put(10,100){\line(0,1){40}}
 \put(10,100){\line(1,-1){60}}
 \put(10,140){\line(2,1){60}}
 \put(70,40){\line(0,1){130}}

 \put(48,127){$\bullet$}

 \put(50,130){\vector(1,0){20}}\put(60,122){{$\tau_E$}}
 \put(50,130){\vector(-1,2){12}}\put(45,142){{$\tau_N$}}
 \put(50,130){\vector(-1,0){40}}\put(20,122){{$\tau_W$}}
 \put(50,130){\vector(-1,-1){35}}\put(30,102){{$\tau_S$}}

 \put(0,-10){$\HP{\tau_W}{\dfbegin{x}}$}

 \thinlines


 \end{picture}
 \hspace{20ex}
 \begin{picture}(100,220)
 \thinlines

 \put(-30,120){\line(2,1){160}}
 \put(10,200){\line(0,-1){200}}
 \put(70,200){\line(0,-1){200}}
 \put(-30,140){\line(1,-1){140}}

 \put(-30,180){\line(1,1){20}}
 \put(-30,140){\line(1,1){60}}
 \put(-10,120){\line(1,1){80}}
 \put(10,100){\line(1,1){80}}
 \put(30,80){\line(1,1){100}}
 \put(50,60){\line(1,1){80}}
 \put(70,40){\line(1,1){60}}
 \put(90,20){\line(1,1){40}}
 \put(110,00){\line(1,1){20}}

 \thicklines

 \put(10,100){\line(0,1){40}}
 \put(10,100){\line(1,-1){60}}
 \put(10,140){\line(2,1){60}}
 \put(70,40){\line(0,1){130}}

 \put(48,127){$\bullet$}

 \put(50,130){\vector(1,0){20}}\put(60,122){{$\tau_E$}}
 \put(50,130){\vector(-1,2){12}}\put(45,142){{$\tau_N$}}
 \put(50,130){\vector(-1,0){40}}\put(20,122){{$\tau_W$}}
 \put(50,130){\vector(-1,-1){35}}\put(30,102){{$\tau_S$}}

 \put(0,-10){$\HP{\tau_S}{\dfbegin{x}}$}

 \thinlines


 \end{picture}\\[3ex]
 \end{center}
 \caption{Half-planes $\HP{\tau}{\dfbegin{x}}$ for $\tau\in\{\tau_E,\tau_N,\tau_W,\tau_S\}$
 are marked with parallel lines; the black dot denotes the start point of~$x$.}
 \label{fig:hps}
 \end{figure}

 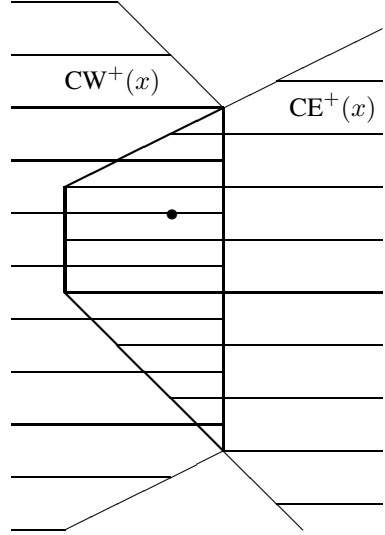
\begin{figure}[htp]
 \begin{center}
 \begin{picture}(100,220)

 \thicklines
 \put(10,100){\line(0,1){40}}
 \put(10,100){\line(1,-1){60}}
 \put(10,140){\line(2,1){60}}
 \put(70,40){\line(0,1){130}}

 \thinlines

 \put(48,127){$\bullet$}

 \put(70,40){\line(1,-1){30}}
 \put(70,40){\line(-2,-1){60}}
 \put(70,170){\line(-1,1){40}}
 \put(70,170){\line(2,1){60}}

 \thinlines
 \put(10,10){\line(-1,0){20}}
 \put(50,30){\line(-1,0){60}}
 \put(70,50){\line(-1,0){80}}
 \put(70,70){\line(-1,0){80}}
 \put(70,90){\line(-1,0){80}}
 \put(70,110){\line(-1,0){80}}
 \put(70,130){\line(-1,0){80}}
 \put(70,150){\line(-1,0){80}}
 \put(70,170){\line(-1,0){80}}
 \put(50,190){\line(-1,0){60}}
 \put(30,210){\line(-1,0){40}}

 \put(10,177){$\CWP{x}$}
 \put(90,20){\line(1,0){40}}
 \put(70,40){\line(1,0){60}}
 \put(50,60){\line(1,0){80}}
 \put(30,80){\line(1,0){100}}
 \put(10,100){\line(1,0){120}}
 \put(10,120){\line(1,0){120}}
 \put(10,140){\line(1,0){120}}
 \put(50,160){\line(1,0){80}}
 \put(90,180){\line(1,0){40}}

 \put(95,167){$\CEP{x}$}

 \end{picture}
 \end{center}
 \caption{$\CWP{x}$ and $\CEP{x}$ regions; the black dot denotes the end point of~$x$.}
\label{fig:cwce}
\end{figure}

Immediately from the definition we have following properties, for $x,y\in\DFX{X}$:
\begin{eqnarray*}
   \label{prop:ce-1}u\in \CEN{x}\cap\dfdomain{x}&\Rightarrow&\dflabel{x}(u)=\dflabel{x\bullet y}(u),\\
   \label{prop:ce-2}u\in \CEN{x}\setminus\dfdomain{x}&\Rightarrow&u\not\in\dfdomain{x\bullet y},\\
   u\in \CWN{x}&\Rightarrow&u\not\in\dfdomain{x},\\
   \CEP{x\bullet y}&\subseteq&\CEP{x},\\
   \CWP{x}&\subseteq&\CWP{x\bullet y}.
\end{eqnarray*}

For $x_1,\ldots,x_k$, $y_1,\ldots,y_l\in\DFX{X}$ we define a
\emph{configuration} as a pair of sequences
$((x_1,\ldots,x_k),(y_1,\ldots,y_l))$. A \emph{successor} of
such a configuration is either
$((x_1,\ldots,x_k,z),(y_1,\ldots,y_l))$ or
$((x_1,\ldots,x_k),(y_1,\ldots,y_l,z))$ for some $z\in X$. If a
configuration $C_2$ is a successor of $C_1$, we write $C_1\prec
C_2$. By $\prec^*$ we denote the transitive closure of $\prec$.

For a configuration $C=((x_1,\ldots,x_k),(y_1,\ldots,y_l))$ let
us denote:
\begin{eqnarray*}
	L(C)&=&\{x_1,\ldots,x_k\},\\
	L_{\bullet}(C)&=&x_1\bullet\ldots\bullet x_k,\\
	R(C)&=&\{y_1,\ldots,x_l\},\\
	R_{\bullet}(C)&=&y_1\bullet\ldots\bullet y_l.
\end{eqnarray*}

Now consider a starting configuration $((x),(y))$, for $x,y\in
X$, $x \neq y$. Assume that there exists a configuration $C$
such that $L_{\bullet}(C)=R_{\bullet}(C)$ and
$((x),(y))\prec^*C$. Now we have:
\begin{itemize}
\item $X$ is not a UD code (resp.\ UD $m$-code),
\item if $L(C)=R(C)$ as multisets then $X$ is not an MSD code (resp.\ MSD $m$-code),
\item if $L(C)=R(C)$ as sets then $X$ is not an SD code (resp.\ SD $m$-code),
\item if $|L(C)|=|R(C)|$ then $X$ is not an ND code (resp.\ ND $m$-code).
\end{itemize}
A configuration $C'$ such that $C'\prec^*C$ and $L_{\bullet}(C)=R_{\bullet}(C)$  for some $C$, is called a \emph{proper configuration}.

Our goal is either to show that there exists no proper
configuration, or to find such configuration(s). In the former
case, $X$ is a code (resp.\ $m$-code) of each kind. In the
latter case, if we find one of such configurations, $X$ is
already not a UD code (resp.\ UD $m$-code). To verify whether
$X$ is an MSD, SD or ND code (resp.\ $m$-code), we have to check
the above conditions for \emph{all} possible proper
configurations.

Let
\begin{displaymath}
\rho = \max_{x\in X}\min\{ n\in\naturals~|~\B{\dfbegin{x}}{n}\cap\dfdomain{x} \neq\emptyset\}.
\end{displaymath}
This number determines a distance within which both parts 
of a configuration, $L$ and $R$, can be found.
The following properties of a proper configuration $C$ are now easily verified:
\begin{eqnarray}
\label{prop:vertspan1}\B{\dfend{L_{\bullet}(C)}}{\rho} \cap (\CWP{R_{\bullet}(C)}\cup \CEP{R_{\bullet}(C)}) \neq \emptyset,\\
\label{prop:vertspan2}\B{\dfend{R_{\bullet}(C)}}{\rho} \cap (\CWP{L_{\bullet}(C)}\cup \CEP{L_{\bullet}(C)}) \neq \emptyset,
\end{eqnarray}
and for the common domain $D=\CEN{L_{\bullet}(C)}\cap\CEN{R_{\bullet}(C)}$:
\begin{equation}
\label{prop:red}\dflabel{L_{\bullet}(C)}\mid_{D}\equiv\dflabel{R_{\bullet}(C)}\mid_{D}.
\end{equation}

Notice that we do not need all of the information contained in
configurations, just those labellings that can be changed by
future catenations. By~(\ref{prop:red}), instead of a
configuration $C$ we can consider a \emph{reduced configuration}
defined as a pair
$(\pi_{RC}(L_{\bullet}(C),R_{\bullet}(C)),\pi_{RC}(R_{\bullet}(C),L_{\bullet}(C)))$
where
\begin{displaymath}
\pi_{RC}(z,z') = (\dfend{z},\dflabel{z}\mid_{\dfdomain{z}\setminus(\CEN{z}\cap \CEN{z'})}).
\end{displaymath}

Obviously we need only consider configurations where the span
along $\tau_E$ is bounded by $|\tau_E|$, \textit{i.e.},
\begin{displaymath}
|\tau_E\cdot(\dfend{L_{\bullet}(C)}-\dfend{R_{\bullet}(C)})|\leq |\tau_E|^2,
\end{displaymath}
since no single figure advances
$\dfend{L_{\bullet}(C)}$ or $\dfend{R_{\bullet}(C)}$ by more
than $|\tau_E|$. Moreover, (\ref{prop:vertspan1}) and
(\ref{prop:vertspan2}) restrict the perpendicular span (in the
direction of $\R{-\frac{\pi}{2}}{\tau_E}$). Hence the number of
reduced configurations, up to translation, is finite and there
is a finite number of proper configurations to check.
Consequently, we can verify whether $X$ is a UD, MSD, SD or ND
code (resp.\ $m$-code).
\end{proof}


Combined with Theorem~\ref{thm:easyNonCodes}, this proves the
decidability for all UD, MSD and ND $m$-codes. The case of two-sided SD
$m$-codes remains unsolved, however.

Two-sided codes with parallel translation vectors constitute an interesting special case.

\begin{definition}[Two-sided codes with parallel translation vectors]
Let $X=\{x_1,\ldots,x_n\}$ be a two-sided code. If there exists
a vector $\tau\in\integers^2$ and numbers $\alpha_1,\ldots,\alpha_n\in\integers$,
not all positive and not all negative, such that
$\dftransition{x_i}=\alpha_i\tau$ for $i=1,\ldots,n$,
then $X$ is called \emph{two-sided with parallel translation vectors}.
\end{definition}

\begin{proposition}[see \cite{KolCOCOON}, Section 4]\label{prop:UDpar}
Let $X$ be a two-sided code with parallel translation vectors. 
It is decidable whether $X$ is a UD code.
\end{proposition}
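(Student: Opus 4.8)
The plan is to adapt the configuration search from the proof of Theorem~\ref{th:oneSide} to the two-sided parallel setting; the obstruction there is that, since every translation vector is a multiple of the single direction $\tau$, figures may now be catenated both ``forward'' and ``backward'' along $\tau$, so the $\tau$-span of a configuration is no longer bounded a priori. First I would normalise $X=\{x_1,\dots,x_n\}$ so that $\dfbegin{x_i}=(0,0)$ and put $M=\max_{x\in X}\max_{p\in\dfdomain{x}}\|p\|$. Then every figure, and every product, lies in a fixed slab of thickness $O(M)$ around the line $\reals\tau$, because translating by a multiple of $\tau$ preserves this slab. Slicing the slab into ``cells'' $C_c$ ($c\in\integers$), each a bounded region meeting only finitely many grid points and with $\tr{\tau}(C_c)=C_{c+1}$, a figure is encoded by a bounded word over the finite alphabet $\Gamma$ of partial cell-labellings, together with its begin-cell and end-cell, and $x\circ y$ becomes an overlay of such words with a cell-shift equal to $\dftransition{x}$ measured in cells, legal exactly when the overlaid cell-labellings have disjoint support.

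Next I would use configurations and proper configurations as in Theorem~\ref{th:oneSide}, but restrict attention to \emph{in-sync} constructions: at each step, extend the side whose end-cell is smaller in the $\tau$-coordinate. Since each catenation changes an end-cell by at most $\max_i|\alpha_i|$, this keeps the two end-cells within a constant distance of one another. I would then check that no witness is lost: given any double factorisation $x_1\circ\dots\circ x_k=y_1\circ\dots\circ y_l$, both sides produce the same domain $D$, so both walks of partial end-cells must pass within $M$ cells of every cell that $D$ meets; hence their step sequences can be interleaved so that the two ``tokens'' never separate by more than a constant, giving an in-sync construction of the same witness.

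The heart of the argument is bounding the data a reduced configuration must retain. Here I would exploit the feature of \emph{non-merging} catenation in the parallel case: once a cell is partially filled, a later figure can reach into it by at most $M$ cells and may occupy only grid points still free there; since a cell holds finitely many grid points, it can be touched only boundedly often, and a later catenation cannot backtrack deep into already-filled territory. Consequently, in an in-sync configuration the set of cells whose labelling can still be altered by future catenations — in particular the symmetric difference of the two partial domains — is confined to a window of computable width around the (mutually close) two end-cells. By~(\ref{prop:red}) it then suffices, up to translation, to record the two end-cells and the cell-labellings inside this window: finitely many reduced configurations, with a computable successor relation. Searching this finite graph for a reduced configuration witnessing $L_\bullet(C)=R_\bullet(C)$ with the two sequences distinct decides unique decipherability, exactly as in Theorem~\ref{th:oneSide}.

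I expect the main obstacle to be making the ``bounded window'' claim quantitative: one must bound, uniformly along every in-sync construction, both the symmetric difference of the two partial domains and the reach of future catenations, and this is precisely where the disjointness requirement of $\circ$ (absent for $\circ_m$) together with the parallelism of the translation vectors are both indispensable. A secondary point needing care is the reduction to in-sync constructions, i.e.\ that synchronised interleaving of the two factorisations never forfeits a witness.
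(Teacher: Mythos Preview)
Your approach diverges from the paper's, and the two steps you flag as ``needing care'' are in fact the whole difficulty; neither is a detail that can be filled in routinely.

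On the in-sync reduction: from ``both walks of partial end-cells pass within $M$ cells of every cell of $D$'' it does \emph{not} follow that the walks can be interleaved order-preservingly with the two tokens a bounded distance apart. Consider walk shapes $0\to N\to 0\to 2N\to 0$ versus $0\to 2N\to 0$: both sweep $[0,2N]$, yet every order-preserving interleaving has a moment with one token near $0$ and the other near $N$. You would need to argue that such mismatched walk shapes cannot arise from a genuine double $\circ$-factorisation over a fixed finite~$X$, and nothing in your sketch does this; the rule ``extend the smaller end-cell'' already produces an unbounded gap on such inputs. The bounded-window claim has the same defect. Even with the two end-cells close, the symmetric difference of the partial domains can extend far behind them (one side having filled the even cells of $[0,c]$ and the other the odd ones, both with end-cell at~$c$). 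Your pigeonhole on grid points per cell bounds how many \emph{times} a fixed cell can be revisited, not how \emph{far back} the two partial domains may still disagree.

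The paper (see the proof of Theorem~\ref{th:line}, which contains the UD case) avoids any synchronisation of the two factorisations by a different decomposition. It fixes a single transverse stripe $\block$ of width~$|\tau|$ and writes each factorisation as $\dot z_1\ddot z_1\cdots\dot z_k\ddot z_k$, where each $\dot z_i\in X$ has hull meeting $\block$ and each $\ddot z_i$ is a product whose hull avoids $\block$. Since the $\dot z_i$ have pairwise disjoint non-empty domains, all lying in a region bounded along~$\tau$, there are only finitely many of them --- this is exactly where non-merging is used, and it bounds the data of the ``starting configuration'' directly, without tracking end-cells at all. The two half-planes $\lblock$ and $\rblock$ are then handled independently by iteratively peeling off the next stripe (L- and R-configurations), with the same disjoint-domains-in-a-bounded-stripe argument keeping each step finite. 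The structural idea you are missing is to anchor the analysis at a fixed transverse slice of the figure rather than at the moving pair of end-points.
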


This can again be generalized to two-sided MSD, SD and ND codes
with parallel translation vectors:

\begin{theorem}\label{th:line}
Let $X$ be a two-sided code with parallel translation vectors. 
It is decidable whether $X$ is a UD, MSD, SD or ND code.
\end{theorem}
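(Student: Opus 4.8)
The plan is to reduce the problem, as in the one-sided case of Theorem~\ref{th:oneSide}, to the verification of a finite set of reduced configurations, exploiting the fact that all translation vectors lie on a single line $\integers\tau$. Since the code is two-sided with parallel translation vectors, Theorem~\ref{thm:easyNonCodes} already tells us that $X$ fails to be a UD, MSD or ND $m$-code; but here we are asked about (non-merging) codes, so this shortcut does not apply, and catenation of two figures is defined only when their domains do not overlap after alignment. This non-overlap requirement is exactly what we shall use to bound configurations.

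\textbf{Key steps.} First I would fix $\tau$ so that $\dftransition{x_i}=\alpha_i\tau$ with the $\alpha_i$ not all of the same sign, and split $X$ into the ``positive'' figures ($\alpha_i>0$), the ``negative'' figures ($\alpha_i<0$) and, separately, the ``stationary'' figures ($\alpha_i=0$). A stationary figure $x$ has $\dfbegin{x}=\dfend{x}$ up to translation, so $x\circ x$ is never defined (its two copies would share a domain point); hence in any valid catenation $x_1\circ\cdots\circ x_k$ a stationary figure can occur at most once, and only a bounded ``prefix/suffix'' interaction with the rest is possible. Next, following Proposition~\ref{prop:UDpar}, I would set up configurations $C=((x_1,\dots,x_k),(y_1,\dots,y_l))$ with $L_\bullet(C)$, $R_\bullet(C)$ defined (all intermediate catenations legal) and track the reduced data $\pi_{RC}$ exactly as in the proof of Theorem~\ref{th:oneSide}: the end point $\dfend{L_\bullet(C)}$ (which lies on the line through $\dfbegin{x_1}$ in direction $\tau$, up to a bounded transverse offset determined by $\rho$), together with the labelling of the domain restricted to the part that future catenations can still overwrite. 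The crucial finiteness observation is that, because catenation in the $\circ$ (non-merging) setting forbids domain overlaps, the ``active window'' near the two end points where $L_\bullet(C)$ and $R_\bullet(C)$ can still differ has bounded extent along $\tau$ — bounded by $\max_i|\alpha_i|\cdot|\tau|$ plus the figure diameters — and bounded transverse extent by the bounding-area argument of Theorem~\ref{th:oneSide}, so there are only finitely many reduced configurations up to translation. One then decides whether a proper configuration $C$ with $L_\bullet(C)=R_\bullet(C)$ is reachable, and, if so, inspects $L(C)$ versus $R(C)$ as multisets, as sets, and by cardinality, to answer the UD, MSD, SD and ND questions respectively.

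\textbf{Main obstacle.} The delicate point is the bound on how far ``behind'' the current ends the two factorizations can still disagree. In the merging ($m$-code) setting this is controlled by $\CEN{x}$ and the monotonicity properties $\CEP{x\bullet y}\subseteq\CEP{x}$; without merging one has the extra fact that overlapping domains are outright forbidden, which should make the window \emph{smaller}, not larger, but the bookkeeping must be done carefully because a figure can ``reach back'' by its full diameter even though its translation vector is a small multiple of $\tau$ — or even zero, for a stationary figure. So the real work is (i) proving that, along $\tau$, the discrepancy window has length at most a computable constant $c(X)$, (ii) proving the transverse span is bounded by the $\tau_N,\tau_S$ half-plane construction of Theorem~\ref{th:oneSide} applied separately to the positive part (ordered by angle) and, after the rotation $\tau\mapsto-\tau$, to the negative part, and (iii) handling the finitely many ways stationary figures can sit inside a factorization. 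Once these three bounds are in place, the reduced-configuration search space is finite and effectively enumerable, and the decision procedure follows exactly as at the end of the proof of Theorem~\ref{th:oneSide}.
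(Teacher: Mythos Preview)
Your plan has a genuine gap at exactly the point you flag as ``the main obstacle,'' and the fix is not a matter of more careful bookkeeping but a different decomposition.

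The reduced-configuration machinery of Theorem~\ref{th:oneSide} works because of the monotonicity $\CEP{x\bullet y}\subseteq\CEP{x}$: once the end point has advanced past a region, no later figure can touch it, so only a bounded window near the current end points needs to be remembered. With translation vectors of both signs this monotonicity fails outright. A factorization can run forward along $\tau$ for arbitrarily many steps and then come back, so the symmetric difference $\dfdomain{L_\bullet(C)}\,\triangle\,\dfdomain{R_\bullet(C)}$ of a proper configuration is \emph{not} bounded by any constant $c(X)$: take $L$ to be a long product of ``positive'' figures and $R$ a single figure, and note that $R$ can still catch up later. Consequently $\pi_{RC}$ as defined in Theorem~\ref{th:oneSide} does not give a finite state space here, and your item~(i) cannot be established in the form you state it. Your item~(ii) on the transverse span is correct (all start/end points lie on the line $\integers\tau$ and figure domains have bounded offset from their start points), but that alone is not enough.

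The paper's proof avoids this by working \emph{stripe by stripe} rather than end-point by end-point. It fixes a central stripe $\block$ of width $|\tau|$ perpendicular to $\tau$ and writes each factorization as an alternation $\dx_1\ddx_1\cdots\dx_k\ddx_k$, where the $\dx_i\in X$ are the individual figures whose hull meets $\block$ and each $\ddx_i\in\dfplus{X}\cup\{\e\}$ stays entirely on one side of $\block$. Because the $\dx_i$ are drawn from the finite set $X$, have pairwise disjoint domains (non-merging catenation), and all lie in the bounded region $\bigcup_{x\in X}(\hull{x}\cup\hullm{x})$, there are only finitely many possible ``starting configurations'' $((\dx_i)_i,(\dy_j)_j)$. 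From each one the proof builds separate L- and R-configurations recording the residual data in $\lblock$ and $\rblock$, together with a finite multiset of end/begin pairs that encode how the $\ddx_i$ must link consecutive $\dx_i$. These configurations are then processed by shifting the stripe by $-\tau$ and adding the finitely many figures that newly meet $\block$; the same non-overlap argument keeps the state bounded at each step. Termination is checked stripe by stripe, and the MSD/SD/ND questions are answered by carrying along the multiset of figures used. This stripe decomposition, not a bounded discrepancy window near the end points, is the idea you are missing.
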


\begin{proof}
Even though the problem is one-dimentional, it cannot be easily transformed to any 
known word problem. Hence, a setting similar to that of Theorem~\ref{th:oneSide} is used:
we define bounding areas and use them to show that the number of possible
configurations is finite. This is accomplished by trying to find a figure that has two 
different factorizations and observing that the configurations are indeed bounded.

Let $X\subseteq\dfplus{\Sigma}$ be finite and non-empty and let
$\dfbegin{x}=(0,0)$ for each $x\in X$. Since translation vectors
of elements of $X$ are parallel, there exists a shortest vector
$\tau\in\integers^2$ such that for all $x\in X$,
\begin{displaymath}
	 \dftransition{x}\in \integers\tau=\{j\tau\mid j\in\integers\}.
\end{displaymath}
In particular, if $(t_1,t_2)=\dftransition{x}$ for some $x\in X$ with 
$\dftransition{x} \neq (0,0)$, then $\tau$ is one of the following vectors:
\begin{eqnarray}
\label{tau1}(t_1/\gcd(|t_1|,|t_2|)&,&t_2/\gcd(|t_1|,|t_2|)),\\
\label{tau2}(-t_1/\gcd(|t_1|,|t_2|)&,&-t_2/\gcd(|t_1|,|t_2|)),
\end{eqnarray}
where $\gcd$ denotes greatest common divisor. If all translation
vectors of elements of $X$ are $(0,0)$, then the decidability
problem is trivial: $X$ is an MSD, SD and ND code (since each
element can be used at most once) and $X$ is a UD code if and
only if no two elements can be concatenated, \textit{i.e.}\ no
two elements $x,y\in X$ have $\dfdomain{x}\cap\dfdomain{y}
\neq\emptyset$ (otherwise $xy=yx$); this case is obviously
decidable.

We define the following \emph{bounding areas}:
\begin{eqnarray*}
\lblock&=&\{u \in \integers^2 \mid 0 > u\cdot\tau \},\\
\block&=&\{u \in \integers^2 \mid 0 \leq u\cdot\tau < \tau\cdot\tau \},\\
\rblock&=&\{u \in \integers^2 \mid \tau\cdot\tau \leq u\cdot\tau \}.
\end{eqnarray*}

\begin{figure}
\begin{center}
\begin{picture}(120,60)
\thinlines

\put(80,0){\line(0,1){60}}
\put(12,0){\line(0,1){60}}

\put(10,27){$\bullet$}
\put(78,27){$\bullet$}
\put(12,30){\vector(1,0){67}}
\put(16,22){$(0,0)$}
\put(45,32){$\tau$}

\put(40,10){$\block$}
\put(-40,10){$\lblock$}
\put(100,10){$\rblock$}

\end{picture}
\end{center}
\caption{Bounding areas $\lblock$, $\block$  and $\rblock$.}
\end{figure}

For a non-empty figure $x\in\dfplus{\Sigma}$, \emph{bounding hulls} of $x$ are sets:
\begin{eqnarray*}
\hull{x} & = & \bigcup_{n=m\ldots M} \tr{n\tau}(\block),\\
\hullm{x} & = & \bigcup_{n=-M\ldots -m} \tr{n\tau}(\block),
\end{eqnarray*}
where
\begin{eqnarray*}
m&=&\min\{n\in \integers \mid \tr{n\tau}(\block) \cap (\dfdomain{x}\cup\{\dfbegin{x},\dfend{x}\}) \neq \emptyset \},\\
M&=&\max\{n\in \integers \mid \tr{n\tau}(\block) \cap (\dfdomain{x}\cup\{\dfbegin{x},\dfend{x}\}) \neq \emptyset \}.
\end{eqnarray*}
In addition, for the empty figure,
$\hull{\dfempty}  =  \emptyset$ and
$\hullm{\dfempty}  =  \emptyset$.

The area $\block$ is a vertical stripe of width equal to the length of~$\tau$. For a figure
$x\in\dfplus{\Sigma}$, $\hull{x}$ is a union of translated stripes such that the whole figure,
including its start and end points, lies inside it. The $\hullm{x}$ variant is a mirror image
of $\hull{x}$.

\textbf{Starting Configurations:}
Our goal is either to find a figure $x\in\dfplus{X}$ that has two different factorizations over 
elements of $X$, or to show that such a figure does not exist. If it exists, without loss 
of generality we can assume it has the following two different
\emph{$x$-} and \emph{$y$-factorizations}:
\begin{displaymath}
x=\dx_1\ddx_1\cdots\ddx_{k-1} \dx_k\ddx_k=\dy_1\ddy_1\cdots\ddy_{l-1} \dy_l\ddy_l
\end{displaymath}
where $\dx_1 \neq \dy_1$, $\dfbegin{\dx_1}=\dfbegin{\dy_1}=(0,0)$ and for $i\in\{1,\ldots,k\}$ and $j\in\{1,\ldots,l\}$
we have:
\begin{center}
\begin{tabular}{rclcrcl}
$\dx_i$  & $\in$ & $X$                          & \ and\  & $\hull{\dx_i}\cap\block$  & $\neq$ & $\emptyset$,\\
$\ddx_i$ & $\in$ & $\dfplus{X}\cup\{\dfempty\}$ & \ and\  & $\hull{\ddx_i}\cap\block$ & =      & $\emptyset$,\\
$\dy_j$  & $\in$ & $X$                          & \ and\  & $\hull{\dy_j}\cap\block$  & $\neq$ & $\emptyset$,\\
$\ddy_j$ & $\in$ & $\dfplus{X}\cup\{\dfempty\}$ & \ and\  & $\hull{\ddy_j}\cap\block$ & =      & $\emptyset$.
\end{tabular}
\end{center}
Observe that the following conditions for the $x$-factorization are satisfied for $i \in \{1,\ldots,k-1\}$:
\begin{itemize}
 \item if $\dfend{\dx_i} \in \lblock$,
 then $\dfbegin{\dx_{i+1}} \in \lblock$,
 \item if $\dfend{\dx_i} = (0,0)$,
 then $\ddx_i = \dfempty$ and $\dfbegin{\dx_{i+1}} = (0,0)$,
 \item if $\dfend{\dx_i} \in \rblock$,
 then $\dfbegin{\dx_{i+1}} \in \rblock$.
\end{itemize}
These are trivial implications of the assumption that $\hull{\ddx_i}\cap\block =
\emptyset$ and the fact that $\dx_i$ must be somehow linked with $\dx_{i+1}$.
Similar conditions are satisfied for the $y$-factorization.
In addition, the $x$-factorization must match the $y$-factorization, \textit{i.e.}:
\begin{itemize}
 \item if $\dfend{\dx_k} \in \lblock$,
 then $\dfend{\dy_l} \in \lblock$,
 \item if $\dfend{\dx_k} = (0,0)$,
 then $\dfend{\dy_l} = (0,0)$,
 \item if $\dfend{\dx_k} \in \rblock$,
 then $\dfend{\dy_l} \in \rblock$.
\end{itemize}
Also, it is clear that
\begin{eqnarray}
 \label{cond:b0dom}\bigcup_{i=1\ldots k}\dfdomain{\dx_i}\cap\block & = &
    \bigcup_{i=1\ldots l}\dfdomain{\dy_i}\cap\block,\\
 \label{cond:b0lab}\bigcup_{i=1\ldots k}\dflabel{\dx_i}\mid_{\block} & = &
    \bigcup_{i=1\ldots l}\dflabel{\dy_i}\mid_{\block}.
\end{eqnarray}

Now we consider all possible pairs of sequences $((\dx_i)_i,(\dy_j)_j)$ satisfying the above
conditions. Note that equality of such sequences is considered not up to translation: 
relative position of sequence elements is important. Such a pair will be called a
\emph{starting configuration}. Observe that there can be only a finite number of such
 configurations, since
\begin{eqnarray*}
 \bigcup_{i=1\ldots k}\dfdomain{\dx_i} & \subseteq & \bigcup_{x\in X} (\hull{x}\cup\hullm{x}),\\
 \bigcup_{i=1\ldots l}\dfdomain{\dy_i} & \subseteq & \bigcup_{x\in X} (\hull{x}\cup\hullm{x})
\end{eqnarray*}
and the set on the right hand side is bounded in the direction of $\tau$.
Also note that if there is no starting configuration for $X$, then obviously $X$ is a 
UD code and consequently an MSD, SD and ND code.

\textbf{Left and Right Configurations:}
We consider independently all starting configurations constructed for~$X$.
By (\ref{cond:b0dom}) and (\ref{cond:b0lab}), we can now forget the
labelling of $\block$. From a starting configuration $((\dx_i)_{i=1}^k,(\dy_j)_{j=1}^l)$ 
we construct \emph{L-} and
\emph{R-configurations} (left and right configurations)
\begin{eqnarray*}
	\conf{R} & = & ((\confDom{x}{R},\confLab{x}{R},\confBegends{x}{R}),(\confDom{y}{R},\confLab{y}{R},\confBegends{y}{R})),\\
	\conf{L} & = &
	((\confDom{x}{L},\confLab{x}{L},\confBegends{x}{L}),(\confDom{y}{L},\confLab{y}{L},\confBegends{y}{L})).
\end{eqnarray*}
First we show a construction for the $x$-part of a configuration:
\begin{center}
\begin{tabular}{rclcrcl}
 $\confDom{x}{R}$ & $=$ & $\bigcup_{i=1\ldots k}\dfdomain{\dx_i}\cap\rblock$ & and & $\confLab{x}{R}$ & $=$ & $\bigcup_{i=1\ldots
 k}\dflabel{\dx_i}\mid_{\rblock}$,\\
 $\confDom{x}{L}$ & $=$ & $\bigcup_{i=1\ldots k}\dfdomain{\dx_i}\cap\lblock$ & and & $\confLab{x}{L}$ & $=$ & $\bigcup_{i=1\ldots
 k}\dflabel{\dx_i}\mid_{\lblock}$
\end{tabular}
\end{center}
and multisets $\confBegends{x}{L}$, $\confBegends{x}{R}$ are obtained in the following way: for each
$i\in\{1,\ldots,k-1\}$:
\begin{itemize}
 \item if $\dfend{\dx_i} \in \lblock$,
 then $(\dfend{\dx_i},\dfbegin{\dx_{i+1}})$ is added to $\confBegends{x}{L}$,
 \item if $\dfend{\dx_i} = (0,0)$,
 then no pair is added to $\confBegends{x}{L}$ or $\confBegends{x}{R}$,
 \item if $\dfend{\dx_i} \in \rblock$,
 then $(\dfend{\dx_i},\dfbegin{\dx_{i+1}})$ is added to $\confBegends{x}{R}$
\end{itemize}
and
\begin{itemize}
 \item if $\dfend{\dx_k} \in \lblock$,
 then $(\dfend{\dx_k},\confEnd)$ is added to $\confBegends{x}{L}$,
 \item if $\dfend{\dx_k} = (0,0)$,
 then no pair is added to $\confBegends{x}{L}$ or $\confBegends{x}{R}$,
 \item if $\dfend{\dx_k} \in \rblock$,
 then $(\dfend{\dx_1},\confEnd)$ is added to $\confBegends{x}{R}$.
\end{itemize}
These multisets keep information on how figures $\dx_i$ and $\dx_{i+1}$ 
should be linked by $\ddx_i$ factors. 
The $\confEnd$ symbol denotes the end of the whole figure.

The $y$-part is created in a similar way.

\begin{example}
Consider a set containing the following figures (vertical lines separate the figures):
\begin{center}
\begin{picture}(300,40)

\dfimgbegin{00}{30}
\dfimglabel{10}{30}{a}
\dfimgend{60}{00}

\put(75,0){\line(0,1){40}}

\dfimglabel{80}{10}{a}
\dfimglabel{1300}{10}{a}
\dfimgbegin{140}{10}
\dfimgend{160}{00}

\put(175,0){\line(0,1){40}}

\dfimgbeginlabel{180}{10}{a}
\dfimgend{200}{00}
\dfimglabel{210}{00}{a}

\put(225,0){\line(0,1){40}}

\dfimgend{230}{10}
\dfimglabel{240}{00}{a}
\dfimgbeginlabel{250}{00}{a}

\put(265,0){\line(0,1){40}}

\dfimgend{270}{10}
\dfimglabel{280}{00}{a}
\dfimgbegin{290}{00}

\end{picture}
\end{center}
Taking $\tau=(2,-1)$, we construct one of possible starting
configurations ($x$-part only). We also show the construction of
the $x$-part of L- and R-configurations.

Figure~\ref{fig:conf} shows the construction. Each image
presents a current figure (with bold lines) and its translation
vector. Domain and labeling of all of the previous figures are
also presented, together with the end point of the previous
figure (which is important for the construction). $\block$ lies
between the slanted lines. Domains and labellings of L- and
R-configurations are presented in Fig.~\ref{fig:confdoms}.
\end{example}

\begin{figure}
\begin{center}
\begin{tabular}{lll}
$\dx_1$
&
\begin{picture}(150,50)
 \put(55,50){\line(-1,-2){25}}
 \put(80,50){\line(-1,-2){25}}
 \linethickness{1.5pt}
 \dfimgbeginlabel{50}{30}{a}
 \dfimglabel{80}{20}{a}
 \dfimgend{70}{20}
 \put(50,30){\vector(2,-1){20}}
\end{picture}
&
$\hull{\ddx_1}\subset\rblock$
\\
$\dx_2$
&
\begin{picture}(150,50)
 \put(55,50){\line(-1,-2){25}}
 \put(80,50){\line(-1,-2){25}}
 \dfimglabel{50}{30}{a}
 \dfimglabel{80}{20}{a}
 \dfimgend{70}{20}
 \linethickness{1.5pt}
 \dfimgbegin{90}{10}
 \dfimglabel{30}{10}{a}
 \dfimglabel{80}{10}{a}
 \dfimgend{110}{00}
 \put(90,10){\vector(2,-1){20}}
\end{picture}
&
\begin{tabular}{l}
$((2,-1),(4,-2))$ is added to $\confBegends{x}{R}$\\
$\hull{\ddx_2}\subset\rblock$
\end{tabular}
\\
$\dx_3$
&
\begin{picture}(150,50)
 \put(55,50){\line(-1,-2){25}}
 \put(80,50){\line(-1,-2){25}}
 \dfimglabel{50}{30}{a}
 \dfimglabel{80}{20}{a}
 \dfimglabel{80}{10}{a}
 \dfimgend{110}{00}
 \dfimglabel{30}{10}{a}
 \linethickness{1.5pt}
 \dfimgbeginlabel{70}{20}{a}
 \dfimglabel{60}{20}{a}
 \dfimgend{50}{30}
 \put(70,20){\vector(-2,1){20}}

\end{picture}
&
\begin{tabular}{l}
$((4,-2),(2,-1))$ is added to $\confBegends{x}{R}$\\
$\ddx_3=\dfempty$
\end{tabular}
\\
$\dx_4$
&
\begin{picture}(150,50)
 \put(55,50){\line(-1,-2){25}}
 \put(80,50){\line(-1,-2){25}}
 \dfimglabel{50}{30}{a}
 \dfimglabel{80}{20}{a}
 \dfimglabel{80}{10}{a}
 \dfimglabel{70}{20}{a}
 \dfimglabel{60}{20}{a}
 \dfimgend{50}{30}
 \dfimglabel{30}{10}{a}
 \linethickness{1.5pt}
 \dfimgbegin{50}{30}
 \dfimglabel{40}{30}{a}
 \dfimgend{30}{40}
 \put(50,30){\vector(-2,1){20}}
\end{picture}
&
\begin{tabular}{l}
no pair is added to $\confBegends{x}{L}$ and $\confBegends{x}{R}$\\
$\hull{\ddx_4}\subset\lblock$
\end{tabular}
\\
$\dx_5$
&
\begin{picture}(150,50)
 \put(55,50){\line(-1,-2){25}}
 \put(80,50){\line(-1,-2){25}}
 \dfimglabel{50}{30}{a}
 \dfimglabel{80}{20}{a}
 \dfimglabel{80}{10}{a}
 \dfimglabel{70}{20}{a}
 \dfimglabel{60}{20}{a}
 \dfimgbegin{50}{30}
 \dfimglabel{40}{30}{a}
 \dfimgend{30}{40}
 \dfimglabel{30}{10}{a}
 \linethickness{1.5pt}
 \dfimgbegin{30}{40}
 \dfimglabel{40}{40}{a}
 \dfimgend{90}{10}
 \put(30,40){\vector(2,-1){60}}
\end{picture}
&
\begin{tabular}{l}
\begin{tabular}{l}
$((-2,1),(-2,1))$ is added to $\confBegends{x}{L}$\\
$((4,-2),(\confEnd))$ is added to $\confBegends{x}{R}$\\
$\hull{\ddx_5}\subset\rblock$
\end{tabular}
\end{tabular}
\end{tabular}
 \caption{Construction of a sample starting configuration and its L- and R-configurations
 (figures added at each step are marked with thick lines).}
 \label{fig:conf}
\end{center}
\end{figure}
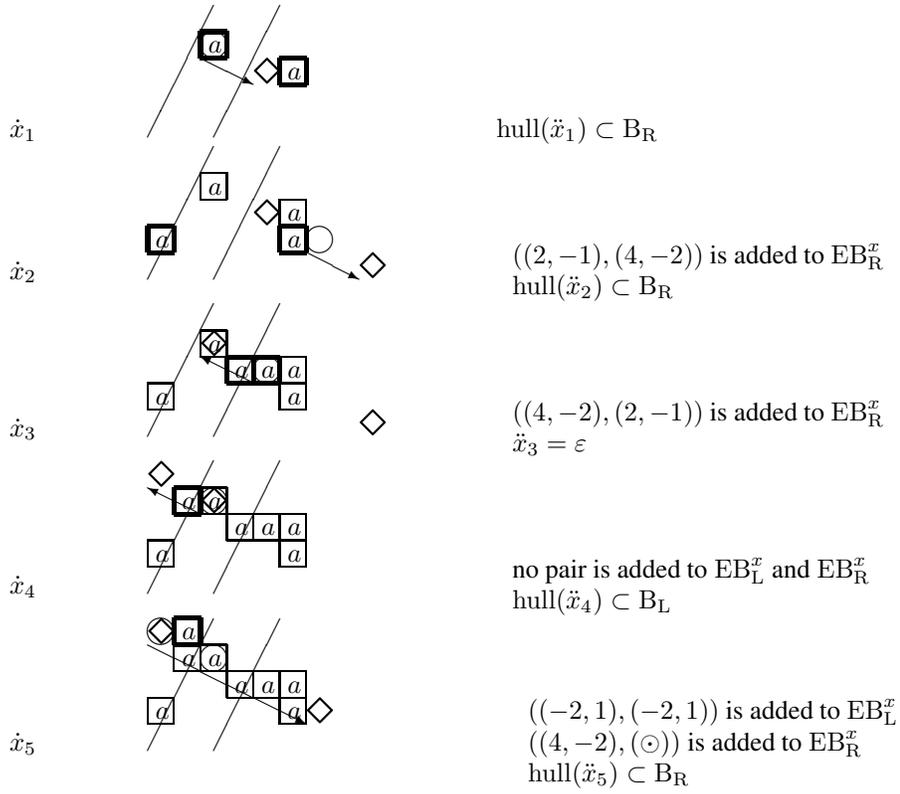

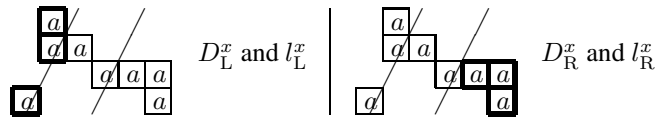
\begin{figure}
\begin{center}

\begin{picture}(230,40)
 \put(25,40){\line(-1,-2){20}}
 \put(50,40){\line(-1,-2){20}}
 \dfimglabel{20}{20}{a}
 \dfimglabel{50}{10}{a}
 \dfimglabel{50}{00}{a}
 \dfimglabel{40}{10}{a}
 \dfimglabel{30}{10}{a}
 \linethickness{1.5pt}
 \dfimglabel{00}{00}{a}
 \dfimglabel{10}{20}{a}
 \dfimglabel{10}{30}{a}
 \put(70,20){$\confDom{x}{L}$ and $\confLab{x}{L}$}
 \thinlines
 \put(120,00){\line(0,1){40}}

 \put(155,40){\line(-1,-2){20}}
 \put(180,40){\line(-1,-2){20}}
 \dfimglabel{130}{00}{a}
 \dfimglabel{150}{20}{a}
 \dfimglabel{140}{20}{a}
 \dfimglabel{140}{30}{a}
 \dfimglabel{160}{10}{a}
 \linethickness{1.5pt}
 \dfimglabel{180}{10}{a}
 \dfimglabel{180}{00}{a}
 \dfimglabel{170}{10}{a}
 \put(200,20){$\confDom{x}{R}$ and $\confLab{x}{R}$}

\end{picture}

 \caption{Domains and labellings of sample L- and R-configurations (respective objects 
 are marked with thick lines).}
 \label{fig:confdoms}
\end{center}
\end{figure}

Now let us consider the R-configuration only (the L-configuration is handled in a similar way). 
We say that an R-configuration
$((\confDom{x}{R},\confLab{x}{R},\confBegends{x}{R}),(\confDom{y}{R},\confLab{y}{R},
\confBegends{y}{R}))$ is \emph{terminating} if it
satisfies the following conditions:
\begin{itemize}
\item the domain and labelling of the $x$-part of the R-configuration match the domain and 
labelling of its $y$-part, \textit{i.e.},
\begin{displaymath}
 \confDom{x}{R} = \confDom{y}{R} \mbox{ and } \confLab{x}{R} = \confLab{y}{R},
\end{displaymath}
\item if a location of the end point of the whole figure is encoded in the R-configuration, then 
its location is the same in both $x$- and $y$-parts, \textit{i.e.}, for all $e\in\integers$,
\begin{displaymath}
 \label{term:end} (e,\confEnd)\in\confBegends{x}{R}\Leftrightarrow(e,\confEnd)\in\confBegends{y}{R},
\end{displaymath}
\item all points that should be linked together are trivially linked, since they are the same 
points, \textit{i.e.}, for all $(e,b)\in \confBegends{x}{R}\cup\confBegends{y}{R}$,
\begin{displaymath}
 \label{term:links2}e=b\mbox{ or }b=\confEnd.
\end{displaymath}
\end{itemize}

Note that if for some starting configuration we obtain a pair of terminating L- and R-
configurations, then $X$ is not a UD code (it can still be an MSD, SD or ND code, though). 
On the other hand, if we show that for all starting configurations such pair of terminating L- 
and R-configurations cannot be reached, then $X$ is a UD code (and hence an MSD, SD 
and ND code).

Similarly as in Theorem~\ref{th:oneSide}, to verify whether $X$ is an MSD, SD or ND code, 
we have to check the following conditions for all possible pairs $C$ of terminating L- and R-
configurations:
\begin{itemize}
\item if $\pi_x(C) = \pi_y(C)$ as multisets then $X$ is not an MSD code,
\item if $\pi_x(C) = \pi_y(C)$ as sets then $X$ is not an SD code,
\item if $|\pi_x(C)| = |\pi_y(C)|$ then $X$ is not an ND code,
\end{itemize}
where $\pi_x(C)$ and $\pi_y(C)$ denote respective multisets of elements used in the 
construction of $C$. Note that computation of $\pi_x(C)$ and $\pi_y(C)$ requires the history 
of $C$ to be kept; this does not spoil the finiteness of the part
of~$C$ that has to be kept.

\textbf{Obtaining New R-Configurations:}
When an R-configuration derived from a starting configuration is terminating, we can proceed 
to the analysis of the L-configuration. If the R-configuration is not terminating, we must check 
whether adding new figures may create a terminating configuration.

Initially such a derived configuration lies in $\rblock$. For simplicity of notation, we can 
translate such a configuration by a vector $-\tau$ (translating all its elements).

Now from the given R-configuration we want to obtain a new
R-configuration by adding new figures from $X$. In order to
obtain a new R-configuration from a given R-configuration, we
create the new R-configuration as a copy of the old one. Then
zero or more of the following operations must be performed (note
that they need not be admissible for an arbitrary
R-configuration or we may not need such operations to be
performed):
\begin{itemize}
 \item an $x$-part operation:
   add any $x\in X$ for which
   \begin{eqnarray}
     \label{extend:con1}\hull{x}\cap\lblock & = & \emptyset,\\
     \label{extend:con2}\hull{x}\cap\block	 & \neq & \emptyset,\\
     \label{extend:con3}\dfdomain{x}\cap\confDom{x}{R} & = & \emptyset
   \end{eqnarray}
	  to the new configuration, adding its domain and labelling function to the domain and 
	  labelling function of the $R$-configuration,
	  and replacing any pair $(e,b)$ from $\confBegends{x}{R}$ in the old configuration with 
	  two pairs $(e,\dfbegin{x})$ and $(\dfend{x},b)$
	  in the new one,
 \item an $y$-part operation: similarly.
\end{itemize}
In each step of creating the new generation of an R-configuration, we add only figures that 
change the given R-configuration
within $\block$; hence (\ref{extend:con2}). We add such figures to an R-configuration only at 
that step. In consecutive steps adding such figures is forbidden; hence (\ref{extend:con1}). At 
the first step this is a consequence
of restrictions for $\ddx_i$ and $\ddy_i$. Condition (\ref{extend:con3}) is obvious. Of course it 
is possible that a given R-configuration is not extendable at all.

After these operations we want the $x$-part of the R-configuration obtained to match its 
$y$-part on $\block$, \textit{i.e.},
\begin{displaymath}
 \confDom{x}{R}\cap\block    = \confDom{y}{R}\cap\block \mbox{~and~}
 \confLab{x}{R}\mid_{\block} = \confLab{y}{R}\mid_{\block}.
\end{displaymath}
In addition, for the $x$-part (and similarly for the $y$-part):
\begin{itemize}
 \item if $((0,0),b)\in\confBegends{x}{R}$,
 then $b=(0,0)$ or $b=\confEnd$,
 \item if $(e,(0,0))\in\confBegends{x}{R}$,
 then $e=(0,0)$,
\end{itemize}
and for both parts
\begin{itemize}
 \item $((0,0),\confEnd)\in\confBegends{x}{R}$ if and only if $((0,0),\confEnd)\in\confBegends{y}{R}$.
\end{itemize}
These conditions are trivial consequences of
(\ref{extend:con1}), (\ref{extend:con2}) and (\ref{extend:con3})
on new figures added to R-configuration. Of course it is
possible that one cannot obtain any R-configuration form the old
one.

Here, since the $x$-part and $y$-part of each newly created
R-configuration are the same, we now do not have to remember the
labelling of $\block$. When we forget this information,
configurations created lie in $\rblock$, so we can translate
them by $-\tau$ as previously.

Now observe that all parts of an R-configuration are bounded:
domains are contained in the area restricted by the widest hull
of elements of~$X$; multisets $\confBegends{x}{R}$ and
$\confBegends{y}{R}$ cannot be infinite, since eventually all
points must be linked. There are only finitely many such
configurations. Either we find a terminating R-configuration, or
we consider all configurations that can be obtained from a given
starting configuration performing one or more steps
described.
\end{proof}

Note that codes with parallel translation vectors are similar to classical
word codes and two-sidedness does not make a significant difference in terms
of decidability. This can be contrasted with the Post Correspondence Problem (PCP),
which is also ``linear'' yet undecidable. The essential difference is that PCP
configurations are extended with pre-defined pairs of words and there is no
\textit{a~priori} bound on how much two parts of a configuration can differ.
Code configurations are extended with individual words or figures and 
the respective bound can be determined by inspecting the size of words/figures.

\subsection{Negative decidability results}

\begin{proposition}[see \cite{KolRAIRO}, Section 2]\label{prop:UDundecid}
Let $X$ be a two-sided code over~$\Sigma$. It is undecidable whether $X$
is a UD code.
\end{proposition}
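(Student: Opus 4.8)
The plan is to establish undecidability by reduction from the Post Correspondence Problem (PCP), which is the natural source of undecidability for ``linear'' matching problems of this flavour. Given a PCP instance, namely two morphisms $g,h:\Delta^*\to\Gamma^*$ over finite alphabets $\Delta,\Gamma$, I would encode each letter $d\in\Delta$ as a single directed figure $x_d$, designed so that a sequence $x_{d_1}\circ\cdots\circ x_{d_k}$ records, in a geometric form, both $g(d_1\cdots d_k)$ and $h(d_1\cdots d_k)$. The two-sidedness is what makes this possible: unlike the one-sided case of Proposition~\ref{prop:UD}, here I can let figures have translation vectors pointing in opposite directions, so that (say) one coordinate direction carries a ``forward'' strip spelling out $g$ applied to the word read so far, while another carries a strip spelling out $h$, and an additional pair of ``return'' figures lets a computation come back to reuse the vertical space. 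The existence of a PCP solution $d_1\cdots d_k$ with $g(d_1\cdots d_k)=h(d_1\cdots d_k)$ should then correspond exactly to the existence of two distinct factorizations of one directed figure over $X$, i.e. to $X$ failing to be a UD code.

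Concretely, I would carry out the steps in this order. First, fix the geometry: choose translation vectors and start/end points for the letter-figures $x_d$ (and for a constant number of auxiliary ``bracket'' or ``synchronization'' figures) so that $X$ is two-sided — e.g.\ the $x_d$ advance in the $+x$ direction while one fixed auxiliary figure has translation vector $(-N,0)$ for suitable $N$, so that $\sum\alpha_i\tran{x_i}=(0,0)$ has a nontrivial nonnegative solution. Second, lay out the domains so that catenating $x_{d_1}\circ\cdots\circ x_{d_k}$ produces a figure whose labelled cells, along two distinct rows, read the words $g(d_1\cdots d_k)$ and $h(d_1\cdots d_k)$ respectively, padded deterministically so that distinct words give distinct figures and no accidental overlaps occur (this forces $\circ$, not $\circ_m$, to remain defined — the construction must keep domains disjoint at every catenation, which is why we work with plain catenation). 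Third, add the machinery (auxiliary figures, a distinguished ``start'' and ``stop'' figure) that forces any double factorization $u_1\circ\cdots\circ u_p=v_1\circ\cdots\circ v_q$ of a figure in $\dfplus X$ to decompose into a $g$-part and an $h$-part that must agree cell-by-cell, hence to yield a word $w\in\Delta^+$ with $g(w)=h(w)$. Fourth, prove the converse: from any PCP solution build the corresponding figure and exhibit its two factorizations explicitly, so $X$ is not UD. Together these give: $X$ is a UD code $\iff$ the PCP instance has no solution, and since PCP is undecidable and the map from PCP instances to sets $X$ is computable, UD-ness of two-sided codes is undecidable.

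The main obstacle I expect is the rigidity of the figure model: a directed figure is a set of labelled lattice cells with designated begin/end points, and catenation is only the ``glue end-to-begin, union the domains'' operation, with no control structure. So all the bookkeeping that in a Turing-machine or tiling reduction would be handled by states must instead be forced purely by the shape and labelling of a finite fixed set of polyominoes, together with the single constraint that catenation is undefined on overlap. Getting the encoding tight enough that the only figures in $\dfplus X$ with two distinct factorizations are the ones coming from genuine PCP solutions — ruling out ``spurious'' matches where the two factorizations use the auxiliary figures in unintended ways, or where a short word's $g$-image coincides with a longer word's $h$-image by accident of padding — is the delicate part, and is where most of the technical work of the proof will sit. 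A standard trick to control this is to mark each letter-figure with its own index and to include length/position markers in the labelling so that the $i$-th output symbol can only ever be matched against the $i$-th output symbol; I would follow that route, and then the verification that the construction works reduces to a finite case analysis on how the two factorizations can be aligned. Since this proposition is cited to \cite{KolRAIRO}, I would expect the actual paper to simply refer there; the sketch above is the shape of the argument that reference contains.
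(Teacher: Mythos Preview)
Your high-level plan --- reduce from PCP, exploit two-sidedness to allow ``return'' pieces, and argue that a double factorization exists iff the PCP instance has a solution --- is the right strategy and matches both the cited reference and the paper's own proof of the more general Theorem~\ref{th:undec}. But the concrete encoding you sketch has a structural gap, and it differs substantially from what the paper actually does.

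The gap is in your first step. If each domain letter $d\in\Delta$ is encoded by a \emph{single} figure $x_d$ that simultaneously records $g(d)$ and $h(d)$ on two rows, then any product $x_{d_1}\circ\cdots\circ x_{d_k}$ has only one factorization over $\{x_d:d\in\Delta\}$: the figures are pairwise distinct and the sequence is recoverable. No second factorization can arise from these pieces alone, regardless of whether $g(w)=h(w)$. What you need instead is \emph{two families} of figures --- one family encoding the $x_i$'s and one encoding the $y_i$'s --- so that the same assembled figure admits an ``$x$-factorization'' and a ``$y$-factorization'', and these agree iff the common labelled region forces $x_{i_1}\cdots x_{i_n}=y_{i_1}\cdots y_{i_n}$. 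Your third step hints at a ``$g$-part and $h$-part'', but this contradicts the one-figure-per-letter setup of step one; the mechanism producing two factorizations is never actually specified.

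The paper's construction (carried out in full in the proof of Theorem~\ref{th:undec}) is rather different in flavour from your row-labelling idea. It works over a \emph{one-letter} alphabet $\Sigma=\{a\}$, so labels carry no information at all; everything is encoded geometrically via ``directed hooked squares'' whose four sides carry jigsaw-style hooks of varying depths. For each PCP index $i$ there are separate $x$-part and $y$-part basic figures, plus a large collection of ``annex'' figures that route information around the plane. The synchronization you correctly flag as the delicate part --- forcing both factorizations to use the \emph{same} index sequence $i_1,\ldots,i_n$ --- is handled by dedicated $I_i$-hooks along the west edge of the assembled figure, which both the $x$-tiling and the $y$-tiling must match. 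Your position-marker idea could perhaps be made to work, but it would need the two-family structure first.
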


This result can again be extended to other decipherability kinds:

\begin{theorem}\label{th:undec}
Let $X$ be a two-sided code over~$\Sigma$. It is undecidable whether $X$
is a UD, MSD, SD or ND code.
\end{theorem}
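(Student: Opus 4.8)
The plan is to recycle the construction that proves Proposition~\ref{prop:UDundecid} and to observe that the ``bad'' double factorization it produces in the positive case is automatically a witness against \emph{all four} decipherability notions, after one cosmetic modification. Recall that Proposition~\ref{prop:UDundecid} is obtained by a reduction from the Post Correspondence Problem: from an instance $I=((u_1,v_1),\dots,(u_n,v_n))$ one effectively builds a \emph{two-sided} code $X_I$ (over a suitable alphabet) such that $X_I$ fails to be a UD code exactly when $I$ has a solution. In that construction the codewords split into two \emph{disjoint} families, an ``upper'' family $A=\{a_1,\dots,a_n\}$ and a ``lower'' family $B=\{b_1,\dots,b_n\}$, plus a bounded number of fixed delimiter codewords (say $s$ and $t$), arranged so that a solution $i_1\cdots i_k$ of $I$ yields
\begin{displaymath}
  s\circ a_{i_1}\circ\cdots\circ a_{i_k}\circ t \;=\; s\circ b_{i_1}\circ\cdots\circ b_{i_k}\circ t ,
\end{displaymath}
both sides denoting the \emph{same} figure because $u_{i_1}\cdots u_{i_k}=v_{i_1}\cdots v_{i_k}$, while $a_j\neq b_{j'}$ for all $j,j'$; and, conversely, every double factorization of a figure over $X_I$ is of this shape. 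The correctness of this last fact is precisely what is proved in \cite{KolRAIRO}.

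First I would use that the two sequences displayed above already differ as \emph{sets}: $a_{i_1}$ lies in the left set but not in the right one (here $k\ge 1$, as any PCP solution is non-empty). Hence $X_I$ is not an SD code, and therefore, by Proposition~\ref{prop:incl2}, not an MSD code either. To cover the ND case I would asymmetrise the reduction slightly: replace the lower-track terminal codeword $t$ by two codewords $t',t''$ carrying a fresh label configuration with $t'\circ t''=t$ as figures, chosen so that $t'$ and $t''$ can only occur, in this order, at the very end of a legal catenation. The canonical factorization then reads $(s,a_{i_1},\dots,a_{i_k},t)$ of length $k+2$ against $(s,b_{i_1},\dots,b_{i_k},t',t'')$ of length $k+3$, so $X_I$ is not an ND code (and, by Proposition~\ref{prop:incl2} again, not an MSD code).

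Conversely, if $I$ has no solution then $X_I$ is a UD code, hence, by Proposition~\ref{prop:incl2}, an MSD, SD and ND code. Combining the two directions: $I$ has a solution $\iff$ $X_I$ is not a UD code $\iff$ $X_I$ is not an MSD code $\iff$ $X_I$ is not an SD code $\iff$ $X_I$ is not an ND code. Since $X_I$ is computable from $I$, and $I$ ranges over a two-sided family, a decision procedure for any one of the four verification problems restricted to two-sided codes would decide PCP; this is impossible, which is the statement of the theorem.

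The only part that is not purely routine is the re-verification of the invariants of the \cite{KolRAIRO} construction after the $t\mapsto(t',t'')$ split: that $X_I$ is still a finite, non-empty, $\e$-free code, that it is still two-sided (the split changes the geometry negligibly), and, crucially, that introducing $t',t''$ creates \emph{no spurious} double factorization of any figure over $X_I$ besides the intended PCP one. I expect this bookkeeping --- arguing that the fresh terminal label configuration forces $t''$ immediately after $t'$ and admits no other placement --- to be the main, though still routine, obstacle; everything else transfers verbatim from the UD case.
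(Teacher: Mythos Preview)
Your plan coincides with the paper's for UD, MSD and SD: reduce PCP to the UD problem via the \cite{KolRAIRO} construction and observe that the two witnessing factorizations use codewords from disjoint families, so their sets and multisets differ as well. One correction: in the actual construction there are no shared delimiters $s,t$. The ``$x$-part'' and ``$y$-part'' figures are \emph{entirely disjoint} (different hook sizes throughout), and the two factorizations $wx_1\circ\cdots\circ wx_n = wy_1\circ\cdots\circ wy_n$ share no element. This matters for your ND step: under your own schematic with a shared~$t$, the split is in trouble --- if you keep $t$ in $X_I$ and add $t',t''$ with $t'\circ t''=t$, then $t=t'\circ t''$ is a double factorization independent of PCP, wrecking the converse; if you delete $t$, the upper-track factorization is no longer over $X_I$. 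In the real construction this obstruction vanishes, because you would split a $y$-specific figure that the $x$-factorization never uses, so your idea is salvageable there.

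For ND the paper does something different and more uniform: each directed hooked square is replaced by a $5\times5$ grid of 25 hooked squares with fresh hook sizes $p,p_1,\dots,p_{24}$ (a fresh set per original square), glued into one figure on the $x$-side but left as 25 separate codewords on the $y$-side. This forces a $1{:}25$ count mismatch at every position, and because each fresh $p_i$ occurs in exactly two matching places, the ``no spurious factorization'' verification becomes a local, mechanical hook-matching check rather than an ad~hoc argument about a single terminal. Your terminal-splitting approach buys simplicity (one extra figure instead of a global blow-up), at the cost of the more delicate bookkeeping you already anticipated; the paper's blow-up buys a routine verification at the cost of a heavier gadget.
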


\begin{proof}
We prove Theorem~\ref{th:undec} for UD codes first. The same reasoning is applied 
to MSD and SD codes, whilst for ND codes we use an additional technique,
described at the end of this proof. The proof is a reduction from
PCP to the decipherability problem. Given a PCP instance, we construct a two-sided
code such that the PCP instance has a solution if and only if the code is not decipherable.
Detailed explanation why this is indeed the case is given in the form of separate
Lemmas~\ref{post-code} and~\ref{code-post}, for the ``only if'' and ``if'' part, respectively.
They are, however, part of the proof since they rely heavily on
notations introduced here and would be impossible to formulate clearly
outside this context.

First we define figures that will be used throughout the reduction.
Let $\Sigma=\{a\}$.
For positive integers $h,h_N,h_E,h_S,h_W$ such that $h_N,h_E,h_S,h_W \leq h$ 
and $b,e\in\{N,E,S,W\}$ (with the usual geographical meaning) we define 
a \emph{directed hooked square} $\DHSh{h}{h_N}{h_E}{h_S}{h_W}{b}{e}$
to be a directed figure $f\in\dfplus{\Sigma}$ with:
\begin{eqnarray*}
	\dfdomain{f}&=&(B\setminus
	(H_N^-\cup H_E^-\cup H_S^-\cup H_W^-))\cup
	(H_N^+\cup H_E^+\cup H_S^+\cup H_W^+),\\
	\dfbegin{f}&=&\left\{
		\begin{array}{ll}
   		(0,h+2)&\mbox{if }b=N,\\
   		(h+2,0)&\mbox{if }b=E,\\
   		(0,-h-2)&\mbox{if }b=S,\\
   		(-h-2,0)&\mbox{if }b=W,\\
   	\end{array}\right.\\
	\dfend{f}&=&\left\{
		\begin{array}{ll}
   		(0,h+3)&\mbox{if }e=N,\\
   		(h+3,0)&\mbox{if }e=E,\\
   		(0,-h-3)&\mbox{if }e=S,\\
   		(-h-3,0)&\mbox{if }e=W,\\
   	\end{array}\right.
\end{eqnarray*}
where
\begin{eqnarray*}
	B&=&\{(x,y)\mid x,y\in\{-h-2,\ldots,h+2\}\},\\
	H_N^-&=&\{(-1,y)\mid y\in\{h+2-h_N,\ldots,h+2\}\}\cup\{(0,h+2-h_N)\},\\
	H_E^-&=&\{(x,1)\mid x\in\{h+2-h_E,\ldots,h+2\}\}\cup\{(h+2-h_E,0)\},\\
	H_S^-&=&\{(1,y)\mid y\in\{-h-2,\ldots,-h-2+h_S\}\}\cup\{(0,-h-2+h_S)\},\\
	H_W^-&=&\{(x,-1)\mid x\in\{-h-2,\ldots,-h-2+h_W\}\}\cup\{(-h-2+h_W,0)\},\\
	H_N^+&=&\{(1,y)\mid y\in\{h+3,\ldots,h+3+h_N\}\}\cup\{(0,h+3+h_N)\},\\
	H_E^+&=&\{(x,-1)\mid x\in\{h+3,\ldots,h+3+h_E\}\}\cup\{(h+3+h_E,0)\},\\
	H_S^+&=&\{(-1,y)\mid y\in\{-h-3-h_S,\ldots,-h-3\}\}\cup\{(0,-h-3-h_S)\},\\
	H_W^+&=&\{(x,1)\mid x\in\{-h-3-h_W,\ldots,-h-3\}\}\cup\{(-h-3-h_W,0)\},\\
\end{eqnarray*}
\textit{i.e.}\ $f$ is a square with hooks on each side (see
\textit{e.g.}\ Figure~\ref{dhs}).

\begin{figure}[htp]
\begin{center}
 \begin{picture}(260,190)
 \dfimglabel{50}{40}{a}\dfimglabel{50}{50}{a}\dfimglabel{50}{60}{a}\dfimglabel{50}{70}{a}\dfimglabel{50}{80}{a}\dfimglabel{200}{90}{a}\dfimglabel{50}{100}{a}\dfimglabel{50}{110}{a}\dfimglabel{50}{120}{a}\dfimglabel{50}{130}{a}\dfimglabel{50}{140}{a}\dfimglabel{50}{150}{a}\dfimglabel{50}{160}{a}\dfimglabel{60}{40}{a}\dfimglabel{60}{50}{a}\dfimglabel{60}{60}{a}\dfimglabel{60}{70}{a}\dfimglabel{60}{80}{a}\dfimglabel{190}{90}{a}\dfimglabel{60}{100}{a}\dfimglabel{60}{110}{a}\dfimglabel{60}{120}{a}\dfimglabel{60}{130}{a}\dfimglabel{60}{140}{a}\dfimglabel{60}{150}{a}\dfimglabel{60}{160}{a}\dfimglabel{70}{40}{a}\dfimglabel{70}{50}{a}\dfimglabel{70}{60}{a}\dfimglabel{70}{70}{a}\dfimglabel{70}{80}{a}\dfimglabel{180}{90}{a}\dfimglabel{70}{100}{a}\dfimglabel{70}{110}{a}\dfimglabel{70}{120}{a}\dfimglabel{70}{130}{a}\dfimglabel{70}{140}{a}\dfimglabel{70}{150}{a}\dfimglabel{70}{160}{a}\dfimglabel{80}{40}{a}\dfimglabel{80}{50}{a}\dfimglabel{80}{60}{a}\dfimglabel{80}{70}{a}\dfimglabel{80}{80}{a}\dfimglabel{110}{180}{a}\dfimglabel{80}{100}{a}\dfimglabel{80}{110}{a}\dfimglabel{80}{120}{a}\dfimglabel{80}{130}{a}\dfimglabel{80}{140}{a}\dfimglabel{80}{150}{a}\dfimglabel{80}{160}{a}\dfimglabel{90}{40}{a}\dfimglabel{90}{50}{a}\dfimglabel{90}{60}{a}\dfimglabel{90}{70}{a}\dfimglabel{90}{80}{a}\dfimglabel{120}{180}{a}\dfimglabel{120}{170}{a}\dfimglabel{90}{110}{a}\dfimglabel{90}{120}{a}\dfimglabel{90}{130}{a}\dfimglabel{90}{140}{a}\dfimglabel{90}{150}{a}\dfimglabel{90}{160}{a}\dfimglabel{100}{40}{a}\dfimglabel{100}{50}{a}\dfimglabel{100}{60}{a}\dfimglabel{100}{70}{a}\dfimglabel{100}{80}{a}\dfimglabel{100}{90}{a}\dfimglabel{100}{100}{a}\dfimglabel{100}{110}{a}\dfimglabel{100}{120}{a}\dfimglabel{100}{130}{a}\dfimglabel{100}{140}{a}\dfimglabel{00}{100}{a}\dfimglabel{40}{110}{a}\dfimglabel{110}{40}{a}\dfimglabel{110}{50}{a}\dfimglabel{110}{60}{a}\dfimglabel{200}{100}{a}\dfimglabel{110}{80}{a}\dfimglabel{110}{90}{a}\dfimglabel{110}{100}{a}\dfimglabel{110}{110}{a}\dfimglabel{110}{120}{a}\dfimglabel{110}{130}{a}\dfimglabel{110}{140}{a}\dfimglabel{30}{110}{a}\dfimglabel{110}{160}{a}\dfimglabel{100}{30}{a}\dfimglabel{100}{20}{a}\dfimglabel{100}{10}{a}\dfimglabel{100}{00}{a}\dfimglabel{120}{80}{a}\dfimglabel{120}{90}{a}\dfimglabel{120}{100}{a}\dfimglabel{120}{110}{a}\dfimglabel{120}{120}{a}\dfimglabel{120}{130}{a}\dfimglabel{120}{140}{a}\dfimglabel{120}{150}{a}\dfimglabel{120}{160}{a}\dfimglabel{130}{40}{a}\dfimglabel{130}{50}{a}\dfimglabel{130}{60}{a}\dfimglabel{130}{70}{a}\dfimglabel{130}{80}{a}\dfimglabel{130}{90}{a}\dfimglabel{130}{100}{a}\dfimglabel{130}{110}{a}\dfimglabel{130}{120}{a}\dfimglabel{130}{130}{a}\dfimglabel{130}{140}{a}\dfimglabel{130}{150}{a}\dfimglabel{130}{160}{a}\dfimglabel{140}{40}{a}\dfimglabel{140}{50}{a}\dfimglabel{140}{60}{a}\dfimglabel{140}{70}{a}\dfimglabel{140}{80}{a}\dfimglabel{140}{90}{a}\dfimglabel{140}{100}{a}\dfimglabel{140}{110}{a}\dfimglabel{140}{120}{a}\dfimglabel{140}{130}{a}\dfimglabel{140}{140}{a}\dfimglabel{140}{150}{a}\dfimglabel{140}{160}{a}\dfimglabel{150}{40}{a}\dfimglabel{150}{50}{a}\dfimglabel{150}{60}{a}\dfimglabel{150}{70}{a}\dfimglabel{150}{80}{a}\dfimglabel{150}{90}{a}\dfimglabel{110}{00}{a}\dfimglabel{20}{110}{a}\dfimglabel{150}{120}{a}\dfimglabel{150}{130}{a}\dfimglabel{150}{140}{a}\dfimglabel{150}{150}{a}\dfimglabel{150}{160}{a}\dfimglabel{160}{40}{a}\dfimglabel{160}{50}{a}\dfimglabel{160}{60}{a}\dfimglabel{160}{70}{a}\dfimglabel{160}{80}{a}\dfimglabel{160}{90}{a}\dfimglabel{160}{100}{a}\dfimglabel{10}{110}{a}\dfimglabel{160}{120}{a}\dfimglabel{160}{130}{a}\dfimglabel{160}{140}{a}\dfimglabel{160}{150}{a}\dfimglabel{160}{160}{a}\dfimglabel{170}{40}{a}\dfimglabel{170}{50}{a}\dfimglabel{170}{60}{a}\dfimglabel{170}{70}{a}\dfimglabel{170}{80}{a}\dfimglabel{170}{90}{a}\dfimglabel{170}{100}{a}\dfimglabel{00}{110}{a}\dfimglabel{170}{120}{a}\dfimglabel{170}{130}{a}\dfimglabel{170}{140}{a}\dfimglabel{170}{150}{a}\dfimglabel{170}{160}{a}

 \dfimgbegin{110}{160}
 \dfimgend{180}{100}

 \put(230,80){\DHSP{1}{2}{3}{4}\DHSNE}

 \end{picture}
 \caption{$\DHSh{4}{1}{2}{3}{4}{N}{E}$; full and reduced graphical
representation.}
 \label{dhs}
\end{center}
\end{figure}
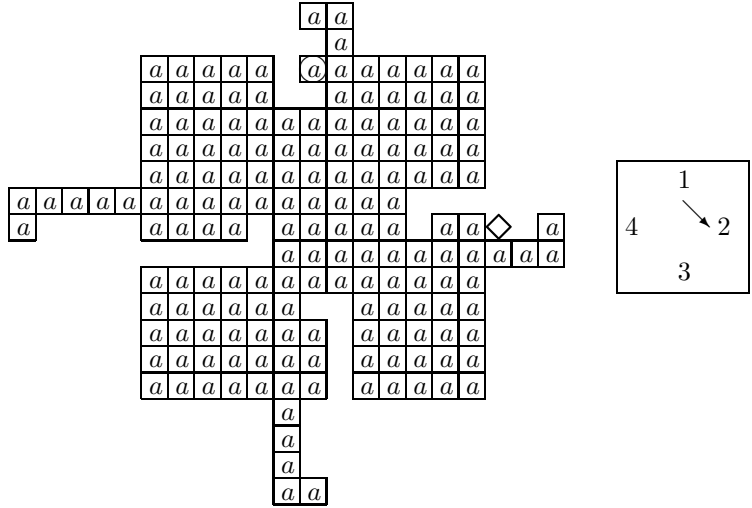

Observe that for
\begin{displaymath}
x=\DHSh{h}{h_N}{h_E}{h_S}{h_W}{b}{e} \mbox{~and~}
x'=\DHSh{h}{h'_N}{h'_E}{h'_S}{h'_W}{b'}{e'}
\end{displaymath}
catenation $x\circ x'$
is defined if and only if $e$ matches $b'$, \textit{i.e.},
\begin{center}
\begin{tabular}{rclcrcll}
$e$&=&$N$ & and & $b'$&=&$S$ & or\\
$e$&=&$E$ & and & $b'$&=&$W$ & or\\
$e$&=&$S$ & and & $b'$&=&$N$ & or\\
$e$&=&$W$ & and & $b'$&=&$E$ &
\end{tabular}
\end{center}
and $h_e=h'_{b'}$.

Now we encode a PCP instance in a set of
directed figures over $\Sigma=\{a\}$. The PCP can be stated as
follows: Let $A=\{a_1,\ldots,a_p\}$ be a finite alphabet,
$x_1,\ldots,x_k$, $y_1,\ldots,y_k\in A^+$ such that $x_i \neq
y_i$ for $i\in\NN{k}$. Find a sequence
$i_1,\ldots,i_n\in\NN{k}$, $n \geq 2$, such that $x_{i_1}\cdots
x_{i_n}=y_{i_1}\cdots y_{i_n}$.

We describe a set of directed figures $\mathbf{X}$ such that a given PCP
instance has a solution if and only if $\mathbf{X}$~is not a UD code.
Consider the following set:
\begin{displaymath}
H=\bigcup_{i\in\NN{k}}\{x_i,y_i,e_{x_i},e_{y_i},I_i\}\cup\{a_i\mid
i\in\NN{p}\}\cup\{x,y,x',y',b_x,b_y,e\},
\end{displaymath}
where $I_i$ are additional elements related to each pair
$(x_i,y_i)$ of the PCP instance. Set $h=|H|=5k+p+7$. We can
define a bijection between $H$ and $\NN{h}$, so from now on,
each element of $H$ is identified with its image by this
bijection. Since $h$ is now fixed, we write
$\DHS{h_N}{h_E}{h_S}{h_W}{b}{e}$ instead of
$\DHSh{h}{h_N}{h_E}{h_S}{h_W}{b}{e}$.

For each $x_i$, $i\in\NN{k}$, we define \emph{basic-figures} $[x_i[$, $]x_i[$ and
$]x_i]$ (Figure~\ref{basic-figures}); these figures will be
used to encode the word $x_i$ standing at the beginning (we call it
\emph{begin solution figure}), in the middle (\emph{middle
solution figure}) and at the end (\emph{end solution figure}) of the
PCP instance solution, respectively.
\begin{figure}[htp]
\begin{center}
\begin{tabular}{rc}
\raisebox{22pt}{$[x_i[^W_E=$}&
\begin{picture}(190,52)
\put(0,0){\DHSP{a_{i_1}}{x'}{b_x}{I_i}\DHSWE}
\put(53,22){$\circ$}
\put(60,0){\DHSP{a_{i_2}}{x'}{x'}{x'}\DHSWE}
\put(113,22){$\circ\ldots\circ$}
\put(140,0){\DHSP{a_{i_{r_i}}}{x'}{x'}{x'}\DHSWE}
\end{picture}
\\
\raisebox{22pt}{$]x_i[^W_E=$}&
\begin{picture}(190,52)
\put(0,0){\DHSP{a_{i_1}}{x'}{x_i}{x'}\DHSWE}
\put(53,22){$\circ$}
\put(60,0){\DHSP{a_{i_2}}{x'}{x'}{x'}\DHSWE}
\put(113,22){$\circ\ldots\circ$}
\put(140,0){\DHSP{a_{i_{r_i}}}{x'}{x'}{x'}\DHSWE}
\end{picture}
\\
\raisebox{22pt}{$]x_i]^W_S=$}&
\begin{picture}(190,52)
\put(0,0){\DHSP{a_{i_1}}{x'}{x'}{x'}\DHSWE}
\put(53,22){$\circ$}
\put(60,0){\DHSP{a_{i_2}}{x'}{x'}{x'}\DHSWE}
\put(113,22){$\circ\ldots\circ$}
\put(140,0){\DHSP{a_{i_{r_i}}}{e}{e_{x_i}}{x'}\DHSWS}
\end{picture}
\end{tabular}
 \caption{Basic-figures for $x_i=a_{i_1}\cdots a_{i_{r_i}}$.}
 \label{basic-figures}
\end{center}
\end{figure}
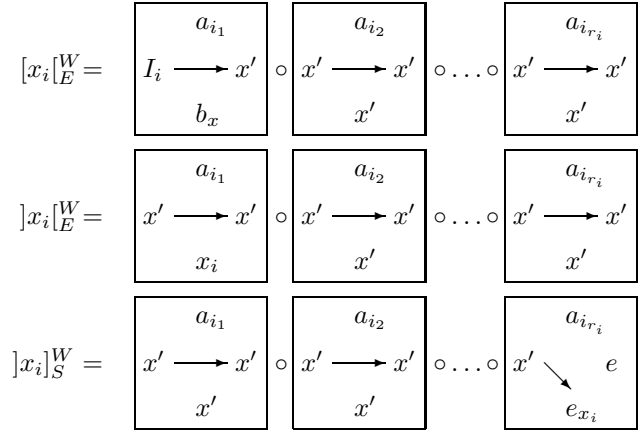

In addition we define \emph{annex-figures} (Figure~\ref{fig:annexFigures}).
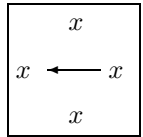
\begin{figure}[htp]
\begin{center}
\begin{tabular}{cccc}
\multicolumn{4}{l}{Annex-figures for passing information from north to south:}\\
\begin{picture}(50,52)
\DHSP{x_i}{x}{x_i}{x}\DHSWE
\end{picture}
&
\begin{picture}(50,52)
\DHSP{x_i}{x}{x_i}{x}\DHSEW
\end{picture}
&
\begin{picture}(50,52)
\DHSP{e_{x_i}}{e}{e_{x_i}}{x}\DHSNW
\end{picture}
&
\begin{picture}(50,52)
\DHSP{e_{x_i}}{e}{e_{x_i}}{x}\DHSWS
\end{picture}
\\
$M_x[i,N.S]^W_E$
&
$M_x[i,N.S]^E_W$
&
$E_x[i,N.S]^N_W$
&
$E_x[i,N.S]^W_S$
\\
&&&\\
\multicolumn{4}{l}{Annex-figures for passing information from north to west:}\\
\begin{picture}(50,52)
\DHSP{x_i}{x}{x}{x_i}\DHSWE
\end{picture}
&
\begin{picture}(50,52)
\DHSP{x_i}{x}{x}{x_i}\DHSEW
\end{picture}
&
\begin{picture}(50,52)
\DHSP{e_{x_i}}{e}{e}{e_{x_i}}\DHSNW
\end{picture}
&
\begin{picture}(50,52)
\DHSP{e_{x_i}}{e}{e}{e_{x_i}}\DHSWS
\end{picture}
\\
$M_x[i,N.W]^W_E$
&
$M_x[i,N.W]^E_W$
&
$E_x[i,N.W]^N_W$
&
$E_x[i,N.W]^W_S$
\\
&&&\\
\multicolumn{4}{l}{Annex-figures for passing information from east to west:}\\
\begin{picture}(50,52)
\DHSP{x}{x_i}{x}{x_i}\DHSWE
\end{picture}
&
\begin{picture}(50,52)
\DHSP{x}{x_i}{x}{x_i}\DHSEW
\end{picture}
&
\begin{picture}(50,52)
\DHSP{x}{e_{x_i}}{e}{e_{x_i}}\DHSWE
\end{picture}
&
\begin{picture}(50,52)
\DHSP{x}{e_{x_i}}{e}{e_{x_i}}\DHSEW
\end{picture}
\\
$M_x[i,E.W]^W_E$
&
$M_x[i,E.W]^E_W$
&
$E_x[i,E.W]^W_E$
&
$E_x[i,E.W]^E_W$
\\
\begin{picture}(50,52)
\DHSP{b_x}{x_i}{b_x}{I_i}\DHSES
\end{picture}
&
\begin{picture}(50,52)
\DHSP{b_x}{x_i}{b_x}{I_i}\DHSNE
\end{picture}
&
\begin{picture}(50,52)
\DHSP{b_x}{e_{x_i}}{e}{I_i}\DHSES
\end{picture}
&
\begin{picture}(50,52)
\DHSP{b_x}{e_{x_i}}{e}{I_i}\DHSNE
\end{picture}
\\
~$BM_x[i,E.W]^E_S$~
&
~$BM_x[i,E.W]^N_E$~
&
~$BE_x[i,E.W]^E_S$~
&
~$BE_x[i,E.W]^N_E$~
\\
&&&\\
\multicolumn{4}{l}{Annex-figures which pass no information:}\\
\begin{picture}(50,52)
\DHSP{x}{x}{x}{x}\DHSWE
\end{picture}
&
\begin{picture}(50,52)
\DHSP{x}{x}{x}{x}\DHSEW
\end{picture}
&&
\\
$N_x[]^W_E$
&
$N_x[]^E_W$
&&
\end{tabular}

 \caption{Annex-figures.}
 \label{fig:annexFigures}
\end{center}
\end{figure}

In the same way we define figures for the ``$y$-part'' of the PCP
instance, replacing the letter $x$ with~$y$.

Let $\mathbf{X}$ be the set of all figures defined ($6k$
basic-figures and $32k+2$ annex-figures, $16k$ for each part:
``$x$-part'' and ``$y$-part''). Observe that there exists no
half-plane of integer values anchored in $(0,0)$ (\textit{i.e.}\
$\{v\in\integers^2\mid u\cdot v>0\}$ for some $u\in\integers^2$) containing all
translation vectors of the figures we have defined.

The following two lemmas now complete the proof of Theorem~\ref{th:undec}
for UD, MSD and SD cases.

\begin{lemma}\label{post-code}
If the PCP instance has a solution then $\mathbf{X}$ is not a UD (MSD, SD) code.
\end{lemma}

\begin{proof}[of Lemma~\ref{post-code}]
Let $i_1,\ldots,i_n\in\NN{k}$ be a solution of the PCP instance, \textit{i.e.}\
$x_{i_1}\cdots x_{i_n}=y_{i_1}\cdots y_{i_n}$.
Consider the following directed figures:
\begin{eqnarray*}
 wx_1&=&[x_{i_1}[^W_E\circ]x_{i_2}[^W_E\circ\cdots\circ]x_{i_{n-1}}[^W_E\circ]x_{i_n}]^W_S,
\end{eqnarray*}
\begin{eqnarray*}
 wx_j&=&E_x[i_n,N.S]^N_W\circ\circTimes{N_x[]^E_W}{|x_n|-2}\circ\\
 	&&\circTimes{N_x[]^E_W}{|x_{n-1}|-1}\circ M_x[i_{n-1},N.E]^E_W\circ\\
 	&&\cdots\\
	&&\circTimes{N_x[]^E_W}{|x_{j+1}|-1}\circ M_x[i_{j+1},N.E]^E_W\circ\\
	&&\circTimes{N_x[]^E_W}{|x_j|-1}\circ M_x[i_j,N.E]^E_W\circ\\
	&&\circTimes{M_x[i_j,E.W]^E_W}{|x_{i_1}\cdots x_{i_{j-1}}|-1}\circ
BM_x[i_j,E.W]^E_S\\
	&&\mbox{(for even $j<n$)},
\end{eqnarray*}
\begin{eqnarray*}
 wx_j&=&BM_x[i_j,E.W]^N_E\circ\circTimes{M_x[i_j,E.W]^W_E}{|x_{i_1}\cdots
x_{i_{j-1}}|-1}\circ\\
 	&&M_x[i_j,N.E]^W_E\circ\circTimes{N_x[]^W_E}{|x_j|-1}\circ\\
 	&&M_x[i_{j+1},N.E]^W_E\circ\circTimes{N_x[]^W_E}{|x_{j+1}|-1}\circ\\
 	&&\cdots\\
 	&&M_x[i_{n-1},N.E]^W_E\circ\circTimes{N_x[]^W_E}{|x_{n-1}|-1}\circ\\
 	&&\circTimes{N_x[]^W_E}{|x_n|-2}\circ E_x[i_n,N.S]^W_S\\
 	&&\mbox{(for odd $j<n$)},
\end{eqnarray*}
\begin{eqnarray*}
 wx_n&=&E_x[i_n,N.W]^N_W\circ\circTimes{E_x[i_n,E.W]^E_W}{|x_{i_1}\cdots
x_{i_n}|-2}\circ BE_x[i_n,E.W]^E_S\\
 &&\mbox{(if $n$ is even)},\\
 wx_n&=&BE_x[i_n,E.W]^N_E\circ\circTimes{E_x[i_n,E.W]^W_E}{|x_{i_1}\cdots
x_{i_n}|-2}\circ E_x[i_n,N.W]^W_S\\
 &&\mbox{(if $n$ is odd)}.
\end{eqnarray*}
\begin{eqnarray*}
\end{eqnarray*}

In the same way we define figures $wy_1,\ldots,wy_n$.

It is easy to see that $wx_1\circ\cdots\circ wx_n=wy_1\circ\cdots\circ
wy_n\subseteq\dfplus{X}$. Hence $\mathbf{X}$ is
not a UD code.\\
\textit{(End of proof of Lemma~\ref{post-code}.)}
\end{proof}

\begin{example}
Consider
\begin{eqnarray*}
\Sigma&=&\{a,b\},\\
X&=&(x_1,x_2,x_3)=(a,ba,bab),\\
Y&=&(y_1,y_2,y_3)=(ab,aba,b).
\end{eqnarray*}
We have $x_1x_2x_3=y_1y_2y_3$. Figure $f$ with two different tilings
with elements of $\mathbf{X}$ is presented in
Figure~\ref{fig:xTiling} and Figure~\ref{fig:yTiling} (where thick
arrows show the flow of
information through annex-figures).
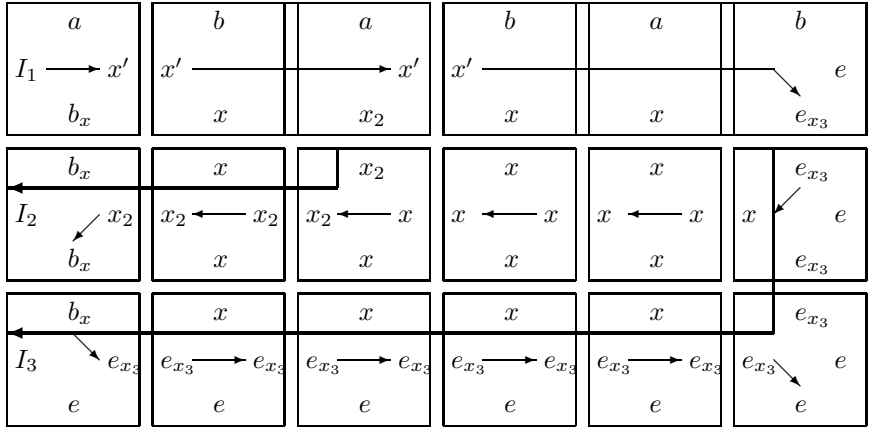
\begin{figure}[htp]
\begin{center}
\begin{picture}(325,160)
\put(0,0){\DHSP{b_x}{e_{x_3}}{e}{I_3}\DHSNE}
\put(55,0){\DHSP{x}{e_{x_3}}{e}{e_{x_3}}\DHSWE}
\put(110,0){\DHSP{x}{e_{x_3}}{e}{e_{x_3}}\DHSWE}
\put(165,0){\DHSP{x}{e_{x_3}}{e}{e_{x_3}}\DHSWE}
\put(220,0){\DHSP{x}{e_{x_3}}{e}{e_{x_3}}\DHSWE}
\put(275,0){\DHSP{e_{x_3}}{e}{e}{e_{x_3}}\DHSWS}

\put(0,55){\DHSP{b_x}{x_2}{b_x}{I_2}\DHSES}
\put(55,55){\DHSP{x}{x_2}{x}{x_2}\DHSEW}
\put(110,55){\DHSP{x_2}{x}{x}{x_2}\DHSEW}
\put(165,55){\DHSP{x}{x}{x}{x}\DHSEW}
\put(220,55){\DHSP{x}{x}{x}{x}\DHSEW}
\put(275,55){\DHSP{e_{x_3}}{e}{e_{x_3}}{x}\DHSNW}

\put(0,110){\DHSP{a}{x'}{b_x}{I_1}\DHSWE}

\put(55,110){\DHSP{b}{}{x}{x'}}

\put(110,110){\DHSP{a}{x'}{x_2}{}}
\put(105,110){\line(1,0){5}}
\put(105,160){\line(1,0){5}}
\put(70,135){\vector(1,0){75}}

\put(165,110){\DHSP{b}{}{x}{x'}}

\put(215,110){\line(1,0){5}}
\put(215,160){\line(1,0){5}}
\put(220,110){\DHSP{a}{}{x}{}}
\put(270,110){\line(1,0){5}}
\put(270,160){\line(1,0){5}}
\put(275,110){\DHSP{b}{e}{e_{x_3}}{}\DHSWS}
\put(180,135){\line(1,0){110}}

\thicklines
\put(125,90){\vector(-1,0){125}}
\put(125,90){\line(0,1){15}}

\put(290,35){\vector(-1,0){290}}
\put(290,35){\line(0,1){70}}

\end{picture}
 \caption{``$X$''-tiling of $f$.}
 \label{fig:xTiling}
\end{center}
\end{figure}

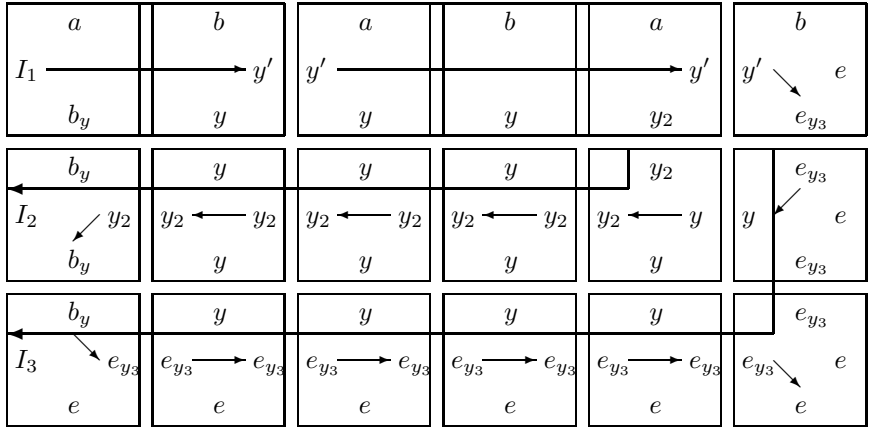
\begin{figure}[htp]
\begin{center}
\begin{picture}(325,160)
\put(0,0){\DHSP{b_y}{e_{y_3}}{e}{I_3}\DHSNE}
\put(55,0){\DHSP{y}{e_{y_3}}{e}{e_{y_3}}\DHSWE}
\put(110,0){\DHSP{y}{e_{y_3}}{e}{e_{y_3}}\DHSWE}
\put(165,0){\DHSP{y}{e_{y_3}}{e}{e_{y_3}}\DHSWE}
\put(220,0){\DHSP{y}{e_{y_3}}{e}{e_{y_3}}\DHSWE}
\put(275,0){\DHSP{e_{y_3}}{e}{e}{e_{y_3}}\DHSWS}

\put(0,55){\DHSP{b_y}{y_2}{b_y}{I_2}\DHSES}
\put(55,55){\DHSP{y}{y_2}{y}{y_2}\DHSEW}
\put(110,55){\DHSP{y}{y_2}{y}{y_2}\DHSEW}
\put(165,55){\DHSP{y}{y_2}{y}{y_2}\DHSEW}
\put(220,55){\DHSP{y_2}{y}{y}{y_2}\DHSEW}
\put(275,55){\DHSP{e_{y_3}}{e}{e_{y_3}}{y}\DHSNW}

\put(0,110){\DHSP{a}{}{b_y}{I_1}}
\put(50,110){\line(1,0){5}}
\put(50,160){\line(1,0){5}}
\put(55,110){\DHSP{b}{y'}{y}{}}
\put(15,135){\vector(1,0){75}}

\put(110,110){\DHSP{a}{}{y}{y'}}
\put(160,110){\line(1,0){5}}
\put(160,160){\line(1,0){5}}
\put(165,110){\DHSP{b}{}{y}{}}
\put(215,110){\line(1,0){5}}
\put(215,160){\line(1,0){5}}
\put(220,110){\DHSP{a}{y'}{y_2}{}}
\put(125,135){\vector(1,0){130}}

\put(275,110){\DHSP{b}{e}{e_{y_3}}{y'}\DHSWS}

\thicklines
\put(235,90){\vector(-1,0){235}}
\put(235,90){\line(0,1){15}}

\put(290,35){\vector(-1,0){290}}
\put(290,35){\line(0,1){70}}

\end{picture}
 \caption{``$Y$''-tiling of $f$.}
 \label{fig:yTiling}
\end{center}
\end{figure}
\end{example}

\begin{lemma}\label{code-post}
If $\mathbf{X}$ is not a UD (MSD, SD) code then the related PCP instance has a solution.
\end{lemma}

\begin{proof}[of Lemma~\ref{code-post}]
Let $f$ be a figure of minimal size (with respect to the size of its
domain) which admits two tilings with elements of $\mathbf{X}$,
\textit{i.e.}\ there exist $f_1,\ldots,f_p,g_1,\ldots,g_q\in
\mathbf{X}$ such that $f_1 \neq g_1$ and $f_1\circ\ldots\circ
f_p=g_1\circ\ldots\circ g_q$.

Consider directed hooked squares tiling the figure $f$ (these are
annex-figures and squares of which basic-figures are built). Let
$d$ be the westernmost among the northernmost of them. We have
following possibilities:
\begin{itemize}

\item\emph{Case 1:}
$d\in\bigcup_{z\in\{x,y\}}\bigcup_{j\in\NN{k}}\{
	E_z[j,N.S]^N_W,
	E_z[j,N.W]^N_W,
	E_z[j,E.W]^N_W
\}$

Since $d$ is the westernmost among the northernmost of all
squares tiling $f$, it cannot have north and west neighbour
squares, \textit{i.e.}\ squares hooked to it at the north and
west sides, respectively. Hence $f=d$, which contradicts the
definition of the double tiling of~$f$.

\item\emph{Case 2:}
$d\in\bigcup_{z\in\{x,y\}}\bigcup_{j\in\NN{k}}\{
	E_z[j,N.S]^W_S,
	E_z[j,N.W]^W_S
\}$

Since $d$ has no north and west neighbours, north and west hooks of
$d$ are uniquely determined by~$f$.
Each of figures listed is uniquely determined by its north and west hooks.
Hence $d$ is also uniquely determined by $f$. Now
$d$ has no west neighbour and it has the start point at its west side,
which implies that it must be the first one in a
sequence of figures whose catenation gives $f$, \textit{i.e.}\
$d=f_1=g_2$. Then either $f=d$ (contradiction as previously), or
$f'=f_2\circ\cdots\circ f_p=g_2\circ\cdots\circ g_q$ is a smaller
figure with two tilings, which contradicts
the minimality of~$f$.

\item\emph{Case 3:}
$d\in\bigcup_{z\in\{x,y\}}\bigcup_{j\in\NN{k}}\{
	M_z[j,N.S]^E_W,
	M_z[j,N.W]^E_W,
	M_z[j,E.W]^E_W
\}$\\ 
\mbox{}\hfill $\cup \{N_x[]^E_W,N_y[]^E_W\}$

As in \emph{Case 1}, $d$ is uniquely determined by $f$. Since
$d$ has no west neighbour and it has the end point at its west
side, it must be the last one in a sequence of figures whose
catenation gives $f$, \textit{i.e.}\ $d=f_p=g_q$. Then either
$f=d$ (contradiction as previously), or $f'=f_1\circ\cdots\circ
f_{p-1}=g_1\circ\cdots\circ g_{q-1}$ is a smaller figure with
two tilings, which contradicts the minimality of~$f$.

\item\emph{Case 4:}
$d\in\bigcup_{z\in\{x,y\}}\bigcup_{j\in\NN{k}}\{
	BM_z[j,E.W]^N_E,
	BE_z[j,E.W]^N_E
\}$

Now $d$ must be the first one in the tiling since it has the
start point at its north side and it is the northernmost in the
tiling. Observe that there exists no square with $e$-hook at the
north side. Hence $BM_z[j,E.W]^N_E$ and $BE_z[j,E.W]^N_E$
($z\in\{x,y\}$) cannot be the first elements of two different
tiling sequences of $f$. Consequently, $d$ is uniquely
determined by $f$ and $d=f_1=g_1$. Contradiction as in
\emph{Case~2}.

\item\emph{Case 5:}
$d\in\bigcup_{z\in\{x,y\}}\bigcup_{j\in\NN{k}}\{
	BM_z[j,E.W]^E_S,
	BE_z[j,E.W]^E_S
\}$

As in \emph{Case 4}, $d$ is uniquely determined by $f$. If
$d=BE_z[j,E.W]^E_S$ (for $z\in\{x,y\}$) then $d$ is the last
element of a tiling sequence. Contradiction as in \emph{Case 3}.
If $d=BM_z[j,E.W]^E_S$ (for $z\in\{x,y\}$) then (for some
$i\in\NN{p}$) $f=f_1\circ f_2\circ\cdots\circ f_{i-1}\circ
d\circ f_{i+1}\circ\cdots\circ f_p$, where
$f_1\in\{M_z[j,E.W]^E_W,M_z[j,N.W]^E_W\}$ and
$f_2=\cdots=f_{i-1}=M_z[j,E.W]^E_W$. Contradiction as in
\emph{Case~1}.

This leads us to a conclusion that:

\item\emph{Case 6:} Directed hooked square $d$ is a part of a
basic-figure. In particular, $d$ is a ``first part'' of $f_1$
and~$g_1$.

\end{itemize}

Now it is easy to observe the following properties of $f$'s tiling:
\begin{enumerate}

\item If $f_1$ is a figure that encodes one of the words from $X$,
then all $f_i$ ($i\in\NN{p}$) are figures encoding
``$x$-part'' of the related PCP instance (since there is no figure
that links a figure from ``$x$-part'' with a
figure from ``$y$-part''). In the same way, if $f_1$ encodes a word
from $Y$, then all $f_i$ encode ``$y$-part'' of the
PCP instance. A similar statement holds for $g_i$ ($i\in\NN{q}$).

\item First ``row'' of figures in the tiling is a sequence of middle
solution figures (may be empty) which is ended by
an end solution figure (that ends the row) and may be started with a
begin solution figure.

\item The sequence of middle solution figures from the first row
implies that in the tiling, leftmost column's hooks ($I_j$ hooks
of some $BM$ and $BE$ annex-figures) correspond to the sequence
of indices of words encoded by those figures.

\end{enumerate}
This leads us to a simple observation that the only possible two
tilings of $f$ are tilings of the form defined in
the proof of Lemma~\ref{post-code}. Hence the
related PCP instance has a solution.\\
\textit{(End of proof of Lemma~\ref{code-post}.)}
\end{proof}

\noindent\textit{(Proof of Theorem~\ref{th:undec}, continued.)}

Lemmas~\ref{post-code} and~\ref{code-post} complete the proof for 
UD, as well as MSD and SD codes, since it is clear
that exactly the same reasoning can be applied in the MSD and SD
cases. ND~codes, however, have to be dealt with separately, since both
factorizations have exactly the same number of figures. An additional technique
to handle the ND case is as follows: replace basic directed
hooked squares for both ``$x$-part'' and ``$y$-part'' with 25
squares. In the ``$x$-part'' the 25 squares will be connected
(into one figure), while in the ``$y$-part'' they will be
disconnected. See Figure~\ref{fig:replacementWE} and
Figure~\ref{fig:replacementNE}, where a construction is
presented for two kinds of figures. In both figures, $p$ and
$p_i$ are new symbols, different for each original directed
hooked square. Other kinds of figures can be dealt with in a
similar way.

Observe that the construction for UD, MSD and SD codes actually
uses vectors from a closed half-plane only. The construction for
ND codes can also be carried out in this way; however, more
complicated encoding figures are required then.

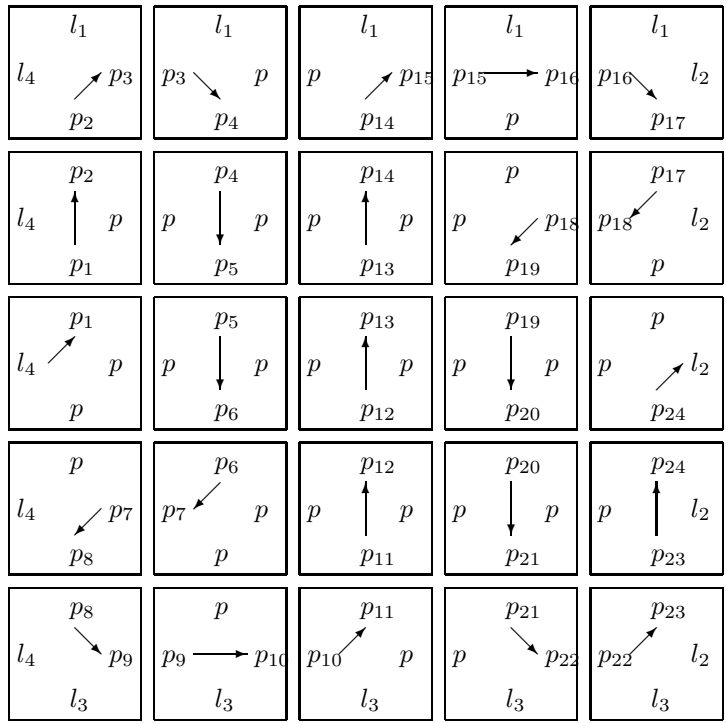
\begin{figure}[htp]
\begin{center}
\begin{picture}(270,270)

\put(0,220){\DHSP{l_1}{p_3}{p_2}{l_4}\DHSSE}
\put(55,220){\DHSP{l_1}{p}{p_4}{p_3}\DHSWS}
\put(110,220){\DHSP{l_1}{p_{15}}{p_{14}}{p}\DHSSE}
\put(165,220){\DHSP{l_1}{p_{16}}{p}{p_{15}}\DHSWE}
\put(220,220){\DHSP{l_1}{l_2}{p_{17}}{p_{16}}\DHSWS}

\put(0,165){\DHSP{p_2}{p}{p_1}{l_4}\DHSSN}
\put(55,165){\DHSP{p_4}{p}{p_5}{p}\DHSNS}
\put(110,165){\DHSP{p_{14}}{p}{p_{13}}{p}\DHSSN}
\put(165,165){\DHSP{p}{p_{18}}{p_{19}}{p}\DHSES}
\put(220,165){\DHSP{p_{17}}{l_2}{p}{p_{18}}\DHSNW}

\put(0,110){\DHSP{p_1}{p}{p}{l_4}\DHSWN}
\put(55,110){\DHSP{p_5}{p}{p_6}{p}\DHSNS}
\put(110,110){\DHSP{p_{13}}{p}{p_{12}}{p}\DHSSN}
\put(165,110){\DHSP{p_{19}}{p}{p_{20}}{p}\DHSNS}
\put(220,110){\DHSP{p}{l_2}{p_{24}}{p}\DHSSE}

\put(0,55){\DHSP{p}{p_7}{p_8}{l_4}\DHSES}
\put(55,55){\DHSP{p_6}{p}{p}{p_7}\DHSNW}
\put(110,55){\DHSP{p_{12}}{p}{p_{11}}{p}\DHSSN}
\put(165,55){\DHSP{p_{20}}{p}{p_{21}}{p}\DHSNS}
\put(220,55){\DHSP{p_{24}}{l_2}{p_{23}}{p}\DHSSN}

\put(0,0){\DHSP{p_8}{p_9}{l_3}{l_4}\DHSNE}
\put(55,0){\DHSP{p}{p_{10}}{l_3}{p_9}\DHSWE}
\put(110,0){\DHSP{p_{11}}{p}{l_3}{p_{10}}\DHSWN}
\put(165,0){\DHSP{p_{21}}{p_{22}}{l_3}{p}\DHSNE}
\put(220,0){\DHSP{p_{23}}{l_2}{l_3}{p_{22}}\DHSWN}

\end{picture}
\end{center}
\caption{Replacement figures for $\DHS{l_1}{l_2}{l_3}{l_4}{W}{E}$.}
\label{fig:replacementWE}
\end{figure}

\begin{figure}[htp]
\begin{center}
\begin{picture}(270,270)

\put(0,220){\DHSP{l_1}{p_8}{p_9}{l_4}\DHSES}
\put(55,220){\DHSP{l_1}{p}{p_7}{p_8}\DHSSW}
\put(110,220){\DHSP{l_1}{p}{p_{1}}{p}\DHSNS}
\put(165,220){\DHSP{l_1}{p_{22}}{p_{21}}{p}\DHSSE}
\put(220,220){\DHSP{l_1}{l_2}{p_{23}}{p_{22}}\DHSWS}

\put(0,165){\DHSP{p_9}{p}{p_{10}}{l_4}\DHSNS}
\put(55,165){\DHSP{p_7}{p}{p_6}{p}\DHSSN}
\put(110,165){\DHSP{p_{1}}{p}{p_{2}}{p}\DHSNS}
\put(165,165){\DHSP{p_{21}}{p}{p_{20}}{p}\DHSSN}
\put(220,165){\DHSP{p_{23}}{l_2}{p_{24}}{p}\DHSNS}

\put(0,110){\DHSP{p_{10}}{p}{p_{11}}{l_4}\DHSNS}
\put(55,110){\DHSP{p_6}{p}{p_5}{p}\DHSSN}
\put(110,110){\DHSP{p_{2}}{p}{p_{3}}{p}\DHSNS}
\put(165,110){\DHSP{p_{20}}{p}{p_{19}}{p}\DHSSN}
\put(220,110){\DHSP{p_{24}}{l_2}{p}{p}\DHSNE}

\put(0,55){\DHSP{p_{11}}{p}{p_{12}}{l_4}\DHSNS}
\put(55,55){\DHSP{p_5}{p_{4}}{p}{p}\DHSEN}
\put(110,55){\DHSP{p_{3}}{p}{p}{p_{4}}\DHSNW}
\put(165,55){\DHSP{p_{19}}{p_{18}}{p}{p}\DHSEN}
\put(220,55){\DHSP{p}{l_2}{p_{17}}{p_{18}}\DHSSW}

\put(0,0){\DHSP{p_{12}}{p_{13}}{l_3}{l_4}\DHSNE}
\put(55,0){\DHSP{p}{p_{14}}{l_3}{p_{13}}\DHSWE}
\put(110,0){\DHSP{p}{p_{15}}{l_3}{p_{14}}\DHSWE}
\put(165,0){\DHSP{p}{p_{16}}{l_3}{p_{15}}\DHSWE}
\put(220,0){\DHSP{p_{17}}{l_2}{l_3}{p_{16}}\DHSWN}

\end{picture}
\end{center}
\caption{Replacement figures for $\DHS{l_1}{l_2}{l_3}{l_4}{N}{E}$.}
\label{fig:replacementNE}
\end{figure}
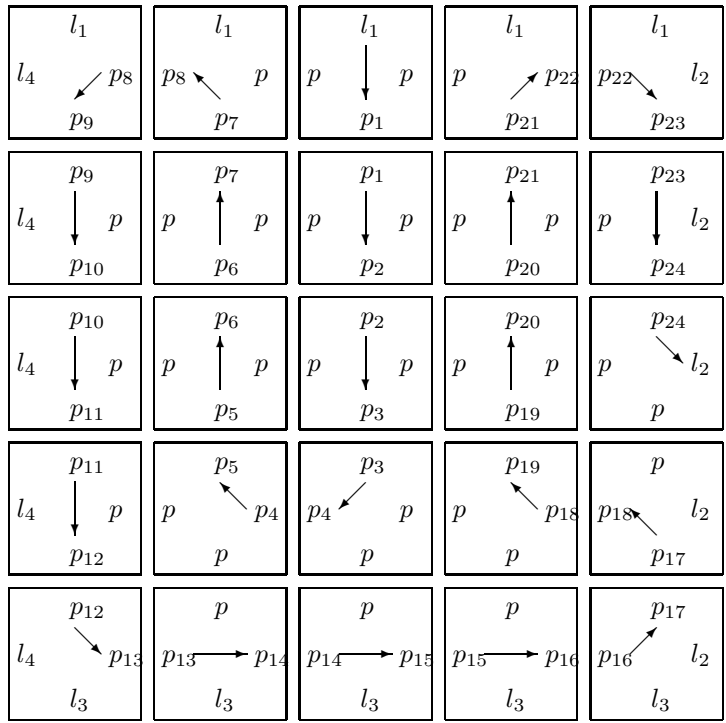

\end{proof}


%
%
%
%
%
%

\subsection{Summary of decidability results}
\label{subsec:summary}

The following table summarizes the status of decipherability decidability.
Decidable cases are marked with a~+, undecidable ones with a~$-$. Combinations
that are still open are denoted with a question mark.

\begin{center}
\begin{tabular}{|c|l|c|c|c|c|}
\hline
~~~~& & ~UD~ & ~MSD~ & ~ND~ & ~SD~\\
\hline
1 & ~One-sided codes & + & + & + & +\\
\hline
2 & ~One-sided $m$-codes & + & + & + & +\\
\hline
3 & ~Two-sided codes & $-$ & $-$ & $-$ & $-$\\
\hline
4 & ~Two-sided $m$-codes & + & + & + & ?\\
\hline
5 & ~Two-sided codes with parallel vectors & + & + & + & +\\
\hline
6 & ~Two-sided $m$-codes with parallel vectors~ & + & + & + & ?\\
\hline
\end{tabular}
\end{center}

\section{Final remarks}
\label{sec:final}

Note that the positive decidability cases depicted in lines 4
and~6 (of the table in Section~\ref{subsec:summary}) are trivial. By 
Theorem~\ref{thm:easyNonCodes}, two-sided
UD, MSD or ND $m$-codes do not exist. For other decidable
combinations, respective proofs lead to effective verification
algorithms.

On the other hand, the case of two-sided SD $m$-codes is
non-trivial; both SD and non-SD codes of this kind exist.
However, none of the proof techniques we have used so far can be
adapted to this case.

\bibliographystyle{abbrvnat}
\bibliography{wm-refs}

\end{document}